\theoremstyle{thmstyleone}%
\newtheorem{thm}{Theorem}[section]
\newtheorem{proposition}[thm]{Proposition}%
\newcommand{\Rmnum}[1]{\expandafter\@slowromancap\romannumeral #1@}
\newtheorem{lem}{Lemma}
\theoremstyle{thmstyletwo}%
\newtheorem{example}{Example}%
\newtheorem{rmk}{Remark}%
\theoremstyle{thmstylethree}%
\newtheorem{definition}{Definition}%
\begin{document}

\title[Article Title]{Simultaneous test of the mean vectors and covariance matrices for  high-dimensional data using RMT}


\author[1]{\fnm{Zhenzhen} \sur{Niu}} 

\author[2]{\fnm{Jianghao } \sur{Li}}

\author[3]{\fnm{Wenya} \sur{Luo}}


\author*[2]{\fnm{Zhidong} \sur{Bai}}\email{baizd@nenu.edu.cn}

\affil[1]{\orgdiv{ School of Mathematics and Statistics}, \orgname{Shandong Normal University}, \orgaddress{\street{ No.1 University Road,
			Science Park, Changqing District}, \city{Jinan}, \postcode{250358}, \country{China}}}

\affil[2]{\orgdiv{School of Mathematics and Statistics}, \orgname{Northeast Normal University}, \orgaddress{\street{5268 Peoples Road}, \city{Changchun}, \postcode{130024}, \country{China}}}

\affil[3]{\orgdiv{School of Data Sciences}, \orgname{ Zhejiang University of Finance and Economics}, \orgaddress{\street{Xueyuan Street, Xiasha Higher Education Park}, \city{Hangzhou}, \postcode{310018}, \country{China}}}


\abstract{
In this paper, we propose a new modified likelihood ratio test (LRT) for simultaneously testing mean vectors and covariance matrices of two-sample populations in high-dimensional settings. By employing tools from Random Matrix Theory (RMT), we derive the limiting null distribution of the modified LRT for generally distributed populations. Furthermore, we compare the proposed test with existing tests using simulation results, demonstrating that the modified LRT exhibits favorable properties in terms of both size and power.
}

\keywords{ High-dimensional data, Simultaneous test, Likelihood ratio test, Random matrix theory}



\maketitle

\section{Introduction}\label{sec1}

Large-dimensional data sets are becoming increasingly prevalent, involving many scientific fields such as biology, medicine, finance and so on, as modern data collecting and processing technology advances. Traditional hypothesis testing methods in multivariate statistical analysis are no longer effective or perform poorly. It is a challenging problem to develop effective methods for statistical inference of high-dimensional data sets.  In recent years, the test for two sample mean vectors (see, e.g., \cite{bai1996effect}; \cite{chen2010two}; \cite{srivastava2008test}) or covariance matrices (see, e.g., \cite{bai2009corrections}; \cite{li2012two}; \cite{cai2013two}) under high-dimensional setting has been a very hot topic in the literature because of its important applications. We refer to \cite{hu2016review} and  \cite{huang2022overview} for recent comprehensive reviews on these topics. However, if we only test for mean vectors or covariance matrices, we may not effectively infer the differences between the two populations. Therefore, it makes sense to develop a new simultaneous test program for high-dimensional data. 

Assume that there are two independent $p$-dimensional populations
and let $\{\mathbf{x}_{i}^{(t)} \in \mathbb{R}^{p}, t=1,2, i=1,\ldots,N_{t} \}$ be independent and identically distributed
(i.i.d.) random sample vectors from the $t$-th population with mean vectors $\bm{\mu}_{t}=(\mu_{t1},\ldots,\mu_{tp})'$ and covariance matrices $\boldsymbol{\Sigma}_{t}$, $t=1, 2$,  respectively.
The simultaneous testing of the mean vectors and covariance matrices among two populations can be formulated as follows:
\begin{align}\label{H0}
\mathrm{H}_0: \bm{\mu}_{1}=\bm{\mu}_{2} ~\mbox{and}~ \boldsymbol{\Sigma}_{1}=\boldsymbol{\Sigma}_{2} \hspace{3ex} \text{v.s.} \hspace{3ex} \mathrm{H}_1: \bm{\mu}_{1} \neq \bm{\mu}_{2} ~\mbox{or} ~  \boldsymbol{\Sigma}_{1} \neq \boldsymbol{\Sigma}_{2}.
\end{align}

For testing hypotheses (\ref{H0}), there exist some conventional methods whose asymptotic properties are established in the
regime where the data dimension $p$ is fixed, and $n$ tends to infinity. See, e.g., when  $p<\min \left\{N_{1}-1, N_{2}-1\right\}$, the likelihood ratio test for $\mathrm{H}_{0}$,
\begin{align*}\label{LRT1}
\Lambda_{N} =\frac{\prod_{t=1}^{2}\left| \mathbf{A}_{t}^{\mathbf{x}} \right|^{N_{t} / 2} N^{p N / 2}}{\left|\left(N_{1} N_{2}\right) N^{-1}\left( \bar{\mathbf{x}}^{(1)}-\bar{\mathbf{x}}^{(2)} \right)\left( \bar{\mathbf{x}}^{(1)}-\bar{\mathbf{x}}^{(2)}\right)^{\prime}+\sum_{g=1}^{2} \mathbf{A}_{t}^{\mathbf{x}} \right|^{n / 2} \prod_{t=1}^{2} N_{t}^{p N_{t} / 2}},
\end{align*}
in \cite{anderson2003introduction},
where
\begin{align*}
\bar{\mathbf{x}}^{(t)}=\frac{1}{N_{t}}\sum_{i=1}^{N_{t}}\mathbf{x}_{i}^{(t)},\quad  \mathbf{A}_{t}^{\mathbf{x}}=\sum_{i=1}^{N_{t}}(\mathbf{x}_{i}^{(t)}-\bar{\mathbf{x}}^{(t)})(\mathbf{x}_{i}^{(t)}-\bar{\mathbf{x}}^{(t)})^{'}, t=1,2
\end{align*}
and $N=N_{1}+N_{2}$.
However, the likelihood ratio is not well defined when $p>\min \left\{N_{1}-1, N_{2}-1\right\}$. In the sequel, in order to make up the deficiency of the traditional test and solve the large dimension problem, new methods have been proposed by researchers.

Some related works on the corrected likelihood ratio test from different perspectives can be found in \cite{jiang2013central} and \cite{jiang2015likelihood}. When the population was the multivariate normal distribution, \cite{jiang2013central} and \cite{jiang2015likelihood} studied LRT under the different assumption that the data dimension grew to infinity but was smaller than simple size, respectively. As a complement of \cite{jiang2015likelihood}, \cite{chen2018study} obtained the asymptotic distribution of the log-likelihood ratio test statistic under alternative hypotheses that were not local ones. 
For more extensions and issues on this line, we refer the reader to \cite{lim2010likelihood}; \cite{bai2023moderate}.

However, a few works investigate simultaneous test of high-dimensional mean vectors and covariance matrices in different settings with general populations. The problem of one-sample simultaneous test procedure based on the quadratic loss for covariance matrix estimation is proposed by \cite{liu2017simultaneous}.
 More recently, \cite{niu2019lr} further focused on the classical likelihood ratio
 test with high-dimensional non-Gaussian data. \cite{cao2021simultaneous} and \cite{hyodo2018simultaneous} considered a weighted sum of one test statistic related to the $L^{2}$-norm-based test for mean vectors and another test statistic related to the $L^{2}$-norm-based for covariance matrices in the context of the one-sample simultaneous test procedure and the two-sample simultaneous test procedure, respectively. However, their focus was limited to the structure or eigenvalues of the population covariance matrices.
 The test statistics of existing
 works are generally structured by weighting one proposed by the high-dimensional mean vectors test and another proposed by the high-dimensional covariance matrices test; see, e.g., \cite{liu2017simultaneous};  \cite{cao2021simultaneous};  \cite{hyodo2018simultaneous}. The weighting combination of such statistics may cause some loss of power; for more details, we may refer to the numerical simulations.
 In contrast, although the LRT statistic requires the dimensions to be smaller than the sample sizes, it makes no assumptions about the structure of the population covariance matrices. Inspired by the above discussions, we propose a novel modified likelihood ratio test for simultaneous testing of mean vectors and covariance matrices of two-sample populations. In order to solve the possible difficulties in the proof for non-Gaussian populations, we make full use of the advanced tools of RMT to accommodate the high-dimensional data.
The rest of the paper is organized as follows.
In Section ~\ref{sec2}, we present the main results of the new modified LRT for simultaneous testing of equalities of mean vectors and  covariance matrices.
Section ~\ref{sec3} shows some simulation results on the empirical sizes and the empirical powers of our test, including a comparison to other criteria under nonnormally  distributed data. Section ~\ref{sec5} contains some conclusions and discussions. The proof of the main result is  provided  in Appendix ~\ref{sec4}.

\section{Main Results}\label{sec2}

In this section, we present the asymptotic distribution of the modified LRT.
For convenience of exposition, we first introduce some notations.
In the following, the notation $\xrightarrow{D}$ and $\xrightarrow{p}$ denote  convergence in distribution and convergence in probability,  respectively.
$\delta_{(\cdot)}$ and $\log(\cdot)$  represent the  indicator function and logarithm function,  respectively.
Let
\begin{equation*}
\begin{split}
&n=n_{1}+n_{2},   \quad r_{n}=\frac{p}{n}, \quad y_{1}:=y_{n_{1}}=\frac{p}{n_{1}}, \quad y_{2}:=y_{n_{2}}=\frac{p}{n_{2}}, \quad h:=h_{n}=\sqrt{y_{1}+y_{2}-y_{1}y_{2}},\\
&c_{1}:=\frac{n_{1}}{n_{1}+n_{2}}=\frac{y_{2}}{y_{1}+y_{2}}, \quad  c_{2}:=\frac{n_{2}}{n_{1}+n_{2}}=\frac{y_{1}}{y_{1}+y_{2}},\\
&l(y_{1},y_{2})=\log(h^{\frac{2c_{1}h^{2}}{y_{1}y_{2}}})\delta_{y_{1}>1}-\log(y_{1}^{\frac{c_{1}(1+y_{2})}{y_{2}}}y_{2}^{\frac{c_{1}(1-y_{1})}{y_{1}}})\delta_{y_{1}>1},\\
&u(y_{1},y_{2})=\log(\frac{y_{1}^{c_{1}}}{h^{c_{1}}})\delta_{y_{1}>1}, \quad v(y_{1},y_{2})=\log(\frac{y_{1}^{2c_{1}}}{h^{2c_{1}(c_{1}+2c_{2})}})\delta_{y_{1}>1},\\
&\Psi(y_{1},y_{2})=c_{2}y_{1}^{2}[y_{2}^{4}\delta_{y_{2}<1}+h^{2}(2y_{2}^{2}-h^{2})\delta_{y_{2}>1}]
-c_{1}y_{2}^{2}[y_{1}^{3}(y_{1}+2y_{2})\delta_{y_{1}<1}+h^{2}(y_{1}+y_{2}+y_{1}y_{2})\delta_{y_{1}>1}].
\end{split}
\end{equation*}
In what follows, we start to consider the hypothesis (\ref{H0}).
Suppose that the observations    $\{\mathbf{x}_{1}^{(t)},\ldots,\mathbf{x}_{N_{t}}^{(t)}\}$ are $p$-dimensional i.i.d. random sample vectors from the $t$-th population for $t=1,2$,
which have the mean vector $\bm{\mu}_{t}=(\mu_{t1},\ldots,\mu_{tp})'$ and covariance matrix $\boldsymbol{\Sigma}_{t}$.
Now we assume that the observation $\mathbf{x}_{i}^{(t)}$ follows  the general multivariate model:
\begin{eqnarray}\label{model}
\mathbf{x}_{i}^{(t)}=\boldsymbol{\Sigma}_{t}^{\frac{1}{2}}\mathbf{z}_{i}^{(t)}+\bm{\mu}_{t},
\end{eqnarray}
for $i=1,\ldots,N_{t}$, $t=1$ and $2$,   where $\mathbf{z}_{i}^{(t)}=(z_{i1}^{(t)},\ldots,z_{ip}^{(t)})'$  and $\{z_{ij}^{(t)}, i\leq N_{t}, j\leq p\}$ are i.i.d.  real random variables.
	\cite{anderson2003introduction}, see e.g., Section 10.3, suggested the use of a modified LRT statistic $\Lambda_{n}^{*}$,
which uses  $n_{t}=N_{t}-1$ instead of sample size $N_{t}$,  $t=1,2$ and $N$ is replaced by $n=n_{1}+n_{2}=N-2$ in $\Lambda_{N}$.

Then under the null hypothesis  \eqref{H0} and linear transformation model \eqref{model}, we can easily rewrite the modified LRT statistic  $\Lambda_{n}^{*}$ as
\begin{equation*} \label{LRT2}
\begin{split}
\Lambda_{n}^{*}=\frac{|N_{1}^{-1}n_{1}\mathbf{S}_{1}^{\mathbf{z}}|^{\frac{N_{1}}{2}} \cdot |N_{2}^{-1}n_{2}\mathbf{S}_{2}^{\mathbf{z}}|^{\frac{N_{2}}{2}}}{|N^{-1}(n_{1}\mathbf{S}_{1}^{\mathbf{z}}+n_{2}\mathbf{S}_{2}^{\mathbf{z}})|^{\frac{N}{2}}}
&\cdot(\frac{|n_{1}\mathbf{S}_{1}^{\mathbf{z}}+n_{2}\mathbf{S}_{2}^{\mathbf{z}}|}{|(n_{1}\mathbf{S}_{1}^{\mathbf{z}}+n_{2}\mathbf{S}_{2}^{\mathbf{z}})+N_{1}N_{2}N^{-1}(\bar{\mathbf{z}}^{(1)}-\bar{\mathbf{z}}^{(2)})(\bar{\mathbf{z}}^{(1)}-\bar{\mathbf{z}}^{(2)})'|})^{\frac{N}{2}},
\end{split}
\end{equation*}
where  $\bar{\mathbf{z}}^{(t)}=N_{t}^{-1}\sum_{i=1}^{N_{t}}\mathbf{z}_{i}^{(t)}$ and
$\mathbf{S}_{t}^{\mathbf{z}}=n_{t}^{-1}\sum_{i=1}^{N_{t}}(\mathbf{z}_{i}^{(t)}-\bar{\mathbf{z}}^{(t)})(\mathbf{z}_{i}^{(t)}-\bar{\mathbf{z}}^{(t)})^{'}$, $t=1, 2$.
In what follows, we remove the superscripts of $\mathbf{S}_{1}^{\mathbf{z}}$ and $\mathbf{S}_{2}^{\mathbf{z}}$ and  we denote $\mathbf{S}_{1}:=\mathbf{S}_{1}^{\mathbf{z}}$ and  $\mathbf{S}_{2}:=\mathbf{S}_{2}^{\mathbf{z}}$ for brevity.

However, \cite{perlman1980unbiasedness}  has proved that this is the likelihood ratio test itself, not the modified test,
which is unbiased to test  \eqref{H0}.
 Thus, in this paper we reconsider the  statistic  $\Lambda_{n}^{*}$,

\begin{equation*}
\begin{split}
\Lambda_{n}^{*}=\frac{|\mathbf{S}_{1}|^{\frac{n_{1}}{2}} \cdot |\mathbf{S}_{2}|^{\frac{n_{2}}{2}}}{|n^{-1}(n_{1}\mathbf{S}_{1}+n_{2}\mathbf{S}_{2})|^{\frac{n}{2}}}
\cdot(\frac{|n_{1}\mathbf{S}_{1}+n_{2}\mathbf{S}_{2}|}{|(n_{1}\mathbf{S}_{1}+n_{2}\mathbf{S}_{2})+n_{1}n_{2}n^{-1}
	(\bar{\mathbf{z}}^{(1)}-\bar{\mathbf{z}}^{(2)})(\bar{\mathbf{z}}^{(1)}-\bar{\mathbf{z}}^{(2)})'|})^{\frac{n}{2}}.
\end{split}
\end{equation*}
Then after a simple calculation, we can obtain
\begin{eqnarray*}
	L=\frac{2}{n}\log\Lambda_{n}^{*}
	=c_{1}\cdot \log|c_{1}^{-1}\mathbf{B}_{n}|+c_{2}\cdot \log|c_{2}^{-1}(\mathbf{I}_{p}-\mathbf{B}_{n})|-\log(1+\mathrm{T}_{n}),
\end{eqnarray*}
where
$\mathbf{B}_{n}=n_{1}\mathbf{S}_{1}(n_{1}\mathbf{S}_{1}+n_{2}\mathbf{S}_{2})^{-1}$,
$\mathrm{T}_{n}=n_{1}n_{2}n^{-1}(\bar{\mathbf{z}}^{(1)}-\bar{\mathbf{z}}^{(2)})^{'}(n_{1}\mathbf{S}_{1}+n_{2}\mathbf{S}_{2})^{-1}(\bar{\mathbf{z}}^{(1)}-\bar{\mathbf{z}}^{(2)})$
and $\mathbf{I}_{p}$ is the $p \times p$ identity matrix.

Note that when $p>n_{1}$ or $p>n_{2}$,  $L$ is undefined. Morever, we know from previous works (see \cite{bai2015convergence} ) that if the fourth moment of $z_{ij}^{(t)}$ exists, matrix $\mathbf{B}_{n}$ almost certainly contains $p-n_{1}$ zero eigenvalues and $p-n_{2}$ one eigenvalues for the conditions $p>\min \left\{N_{1}-1, N_{2}-1\right\}$.
Therefore, we redefine $L$  by restricting the eigenvalues of $\mathbf{B}_{n}$ between zero and one, that is
\begin{eqnarray}\label{LRT1}
L= \sum_{\lambda_{i}^{\mathbf{B}_{n}}\in(0,1)}[c_{1}\log\lambda_{i}^{\mathbf{B}_{n}} + c_{2} \log(1-\lambda_{i}^{\mathbf{B}_{n}})]-\log(1+\mathrm{T}_{n}).
\end{eqnarray}
where $\lambda_{i}^{\mathbf{B}_{n}}$ denotes the $i$-th smallest eigenvalues of  $\mathbf{B}_{n}$.

Next, we impose the following two assumptions, which are frequently utilized in random matrix theory, to analyze the asymptotic behaviors of the considered statistics throughout the paper.
	\begin{itemize}
	\item \textbf{Assumption A:} The random vectors $\mathbf{z}_{i}^{(t)}=(z_{i1}^{(t)},\ldots,z_{ip}^{(t)})'$ satisfy the model (\ref{H0}) with common moments $\mathbb{E}z_{11}^{(1)}=\mathbb{E}z_{11}^{(2)}=0$, $\mathbb{E}(z_{11}^{(1)})^{2}=\mathbb{E}(z_{11}^{(2)})^{2}=1$, and  $\mathbb{E}(z_{11}^{(1)})^{4}=\beta_{1}+3 <\infty$ and  $\mathbb{E}(z_{11}^{(2)})^{4}=\beta_{2}+3 <\infty$;
	\item \textbf{Assumption B:} $y_{1}\neq 1$, $y_{2}\neq 1$ and $r_{n} <1$, $\lim y_{1} \notin \{0, 1\} $, $\lim y_{2}\notin \{0, 1\} $
	and $r=\lim r_{n} \in (0, 1)$ as
	$\min\{ p, n_{1}, n_{2}\} \rightarrow \infty$;
\end{itemize}
\begin{rmk}
 In Assumption B, we require the data dimension $p$ to be less than $n_{1} + n_{2}$  so that
 the matrix $n_{1}\mathbf{S}_{1}+n_{2}\mathbf{S}_{2}$ is invertible, but allow that the data dimension $p$ to be greater or less than the sample size $n_{t}, t=1,2$.
\end{rmk}

Subsequently,  the following theorem establishes the joint limiting null distribution of $L$ defined in \eqref{LRT1}.
\begin{thm}\label{thm}
	Under the assumptions A and B , and the null hypothesis $\mathrm{H}_{0}$ in (\ref{H0}), we have
	\begin{align*}
		\frac{L-p \cdot l_{n}-\mu_{n}-\log(1-r_{n})}{\nu_{n}} \xrightarrow{D} N(0,1),
	\end{align*}
	where
	\begin{equation*}\label{eq:1}
	\begin{split}
	l_{n}=&\log(\frac{y_{1}^{c_{2}}y_{2}^{c_{1}}h^{\frac{2h^{2}}{y_{1}y_{2}}}}{(y_{1}+y_{2})^{\frac{(y_{1}+y_{2})}{y_{1}y_{2}}}|1-y_{1}|^{\frac{c_{1}|1-y_{1}|}{y_{1}}}|1-y_{2}|^{\frac{c_{2}|1-y_{2}|}{y_{2}}}})-l(y_{1},y_{2})-l(y_{2},y_{1}),\\
	\mu_{n}=&\log[\frac{(y_{1}+y_{2})^{\frac{1}{2}}|1-y_{1}|^{\frac{c_{1}}{2}}|1-y_{2}|^{\frac{c_{2}}{2}}}{h}]-u(y_{1},y_{2})-u(y_{2},y_{1})\\
	&+\frac{\beta_{1}\Psi(y_{1},y_{2})}{2y_{1}y_{2}^{2}(y_{1}+y_{2})^{2}}+\frac{\beta_{2}\Psi(y_{2},y_{1})}{2y_{2}y_{1}^{2}(y_{1}+y_{2})^{2}},\\
	\nu_{n}^{2}=&\log\frac{h^{4}}{|1-y_{1}|^{2c_{1}^{2}}|1-y_{2}|^{2c_{2}^{2}}(y_{1}+y_{2})^{2}}\\
	&+2[v(y_{1},y_{2})+v(y_{2},y_{1})+\log(h^{4c_{1}c_{2}})\delta_{y_{1}>1}\delta_{y_{2}>1}]\\
	&+\frac{(y_{1}\beta_{1}+y_{2}\beta_{2})}{y_{1}^{2}y_{2}^{2}(y_{1}+y_{2})^{2}}[(y_{1}-1)y_{2}^{2}\delta_{y_{1}>1}-(y_{2}-1)y_{1}^{2}\delta_{y_{2}>1}]^{2}.
	\end{split}
	\end{equation*}
\end{thm}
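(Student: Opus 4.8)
The plan is to split the statistic $L$ of \eqref{LRT1} into two structurally different pieces and treat them separately. I would write
\[
L = L_{1} - \log(1 + \mathrm{T}_{n}), \qquad L_{1} := \sum_{\lambda_{i}^{\mathbf{B}_{n}}\in(0,1)}\bigl[c_{1}\log\lambda_{i}^{\mathbf{B}_{n}} + c_{2}\log(1-\lambda_{i}^{\mathbf{B}_{n}})\bigr],
\]
so that $L_{1}$ is a linear spectral statistic of $\mathbf{B}_{n}$ with the test function $f(\lambda)=c_{1}\log\lambda + c_{2}\log(1-\lambda)$, while the remainder is a smooth function of the Hotelling-type quadratic form $\mathrm{T}_{n}$. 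Under $\mathrm{H}_{0}$ and model \eqref{model} the two populations share a common covariance, which cancels throughout; hence one may take $\boldsymbol{\Sigma}_{1}=\boldsymbol{\Sigma}_{2}=\mathbf{I}_{p}$, so that $\mathbf{B}_{n}=n_{1}\mathbf{S}_{1}(n_{1}\mathbf{S}_{1}+n_{2}\mathbf{S}_{2})^{-1}$ is a standard MANOVA (multivariate Beta) matrix built from two independent white samples and $\mathbf{A}:=n_{1}\mathbf{S}_{1}+n_{2}\mathbf{S}_{2}$ is a sample-covariance-type matrix with $n$ degrees of freedom and aspect ratio $r_{n}=p/n$. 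The whole argument rests on the CLT for linear spectral statistics of such matrices.

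First I would dispose of the quadratic-form term, showing it contributes only to the centering. Setting $\mathbf{d}=\sqrt{n_{1}n_{2}/n}\,(\bar{\mathbf{z}}^{(1)}-\bar{\mathbf{z}}^{(2)})$, under $\mathrm{H}_{0}$ the vector $\mathbf{d}$ has mean zero and covariance $\mathbf{I}_{p}$ up to $O(1/n)$, and $\mathrm{T}_{n}=\mathbf{d}'\mathbf{A}^{-1}\mathbf{d}$. Since each $\bar{\mathbf{z}}^{(t)}$ is uncorrelated with the within-sample deviations generating $\mathbf{A}$, conditioning on $\mathbf{A}$ together with a quadratic-form (trace) concentration bound gives $\mathrm{T}_{n}=\operatorname{tr}(\mathbf{A}^{-1})+o_{P}(1)$, and the Marchenko--Pastur law for $\mathbf{A}$ yields $\operatorname{tr}(\mathbf{A}^{-1})=\tfrac{r_{n}}{1-r_{n}}+O_{P}(n^{-1/2})$. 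Consequently $-\log(1+\mathrm{T}_{n})=\log(1-r_{n})+o_{P}(1)$; because $\nu_{n}$ is bounded away from $0$ while the fluctuation of $\mathrm{T}_{n}$ is of order $n^{-1/2}$ and is driven by $\mathbf{d}$ (of smaller order than the $O(1)$ fluctuation of $L_{1}$), this term is asymptotically negligible relative to $\nu_{n}$ and uncorrelated with $L_{1}$. This is precisely the origin of the deterministic centering constant $\log(1-r_{n})$ in the theorem.

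The core of the proof is the CLT for $L_{1}$. The eigenvalues of $\mathbf{B}_{n}$ are the Möbius images $\lambda_{F}/(1+\lambda_{F})$ of the eigenvalues of the associated $F$-matrix (when the latter is defined), so their limiting spectral distribution is the Wachter/MANOVA law with support $[a,b]\subset(0,1)$; Assumption B ($y_{1}\neq1$, $y_{2}\neq1$, $\lim y_{t}\notin\{0,1\}$, $r<1$) guarantees $0<a$ and $b<1$, i.e.\ the edges stay strictly away from the logarithmic singularities of $f$ at $0$ and $1$. When $y_{1}>1$ (resp.\ $y_{2}>1$) the matrix $\mathbf{B}_{n}$ carries $p-n_{1}$ null eigenvalues (resp.\ $p-n_{2}$ unit eigenvalues) almost surely \cite{bai2015convergence}; the restriction $\lambda_{i}^{\mathbf{B}_{n}}\in(0,1)$ removes these atoms, and keeping track of the different regimes is exactly what produces the indicator terms $\delta_{y_{t}>1}$ appearing in $l(\cdot)$, $u(\cdot)$, $v(\cdot)$ and $\Psi(\cdot)$. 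On the continuous part I would invoke the substitution/martingale CLT for linear spectral statistics of large $F$-matrices, in its non-Gaussian form carrying fourth-moment corrections through $\beta_{1},\beta_{2}$, to conclude that $L_{1}-p\,l_{n}-\mu_{n}$ is asymptotically $N(0,\nu_{n}^{2})$; the leading mean $p\,l_{n}$, the $O(1)$ mean correction $\mu_{n}$ and the variance $\nu_{n}^{2}$ are then obtained by evaluating the associated contour integrals of $f$ against the Stieltjes-transform kernels of the MANOVA ensemble.

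The hard part will be twofold. Analytically, $f$ is singular at both endpoints of $(0,1)$, so the LSS machinery, standard only for functions analytic on a neighborhood of the spectrum, must be combined with an exact-separation argument (no eigenvalue of $\mathbf{B}_{n}$ escapes the bulk toward $0$ or $1$, building on the convergence results underlying \cite{bai2015convergence}) so that $f$ is effectively evaluated only on the compact set $[a,b]$. Computationally, the bulk of the labour is the explicit evaluation of the mean and covariance integrals, tracking both the $\delta_{y_{t}>1}$ regime switches and the $\beta_{1},\beta_{2}$ kurtosis contributions, to arrive at the closed forms $l_{n}$, $\mu_{n}$, $\nu_{n}^{2}$. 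Finally, combining the $N(0,\nu_{n}^{2})$ limit of $L_{1}$ with the negligible remainder $-\log(1+\mathrm{T}_{n})-\log(1-r_{n})=o_{P}(1)$ via Slutsky's theorem yields the stated standard normal convergence.
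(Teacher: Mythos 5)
Your decomposition $L = L_{1} - \log(1+\mathrm{T}_{n})$, the identification of the centering constant $\log(1-r_{n})$ via $\mathrm{T}_{n} \xrightarrow{p} r_{n}/(1-r_{n})$, and the appeal to a fourth-moment-corrected CLT for the linear spectral statistic of the MANOVA matrix $\mathbf{B}_{n}$ reproduce exactly the architecture of the paper's proof. The main difference in emphasis is that the paper does not re-derive the CLT for $L_{1}$ (its $ZH$) by contour integration and exact separation: it imports it wholesale as Theorem 2.1 of Zhang et al.\ (2019), so everything you flag as the ``hard part'' for $L_{1}$ is outsourced to a citation. The one step in your sketch that would not survive as written is the treatment of $\mathrm{T}_{n}$: for non-Gaussian populations $\bar{\mathbf{z}}^{(t)}$ is only uncorrelated with, not independent of, $\mathbf{A}=n\mathbf{S}_{n}$ (which is built from the same observations through $\bar{\mathbf{z}}^{(t)}$), so you cannot condition on $\mathbf{A}$ and invoke the trace-concentration lemma for $\mathbf{d}'\mathbf{A}^{-1}\mathbf{d}$ directly. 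The paper repairs precisely this point with a Sherman--Morrison/leave-one-out argument: it adds $\tfrac{N_{t}}{n}\bar{\mathbf{z}}^{(t)}(\bar{\mathbf{z}}^{(t)})'$ back to $\mathbf{S}_{n}$ to form $\mathbf{S}_{nt}$, expands $\bar{\mathbf{z}}^{(t)}=N_{t}^{-1}\sum_{i}\mathbf{z}_{i}^{(t)}$, removes $\tfrac{1}{n}\mathbf{z}_{i}^{(t)}(\mathbf{z}_{i}^{(t)})'$ so that the resolvent $\mathbf{S}_{it}^{-1}$ is independent of $\mathbf{z}_{i}^{(t)}$ before applying the quadratic-form limit (via ridge-resolvent asymptotics and Vitali's theorem), and controls the off-diagonal terms with the second-moment bound of Lemma 5.3 of Zheng et al.\ (2015). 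With that repair, your argument coincides with the paper's.
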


\begin{rmk}
	Note that if $y_{1}$ or  $y_{2}$  is close to  1,  the variance $\nu_{n}$ will increase rapidly  and the LRT will become unstable.
\end{rmk}

	Let $\alpha$ be the significance level.
	According to Theorem \ref{thm},
	the rejection region of the test problem \eqref{H0} is denoted  by
	\begin{eqnarray*}
		&& \{(\mathbf{x}_{1},\ldots,\mathbf{x}_{n}) : L< -\nu_{n}z_{\alpha/2}+p \cdot l_{n}+\mu_{n}+\log(1-r_{n}) \quad or \\
		&&  L> \nu_{n}z_{\alpha/2}+p \cdot l_{n}+\mu_{n}+\log(1-r_{n})   \},
	\end{eqnarray*}
	where $z_{\alpha}$ is the upper $\alpha$ quantile of the standard normal distribution,
	$\nu_{n}$, $\mu_{n}$ and $l_{n}$ are defined in  Theorem \ref{thm}.

\begin{rmk}
		Note that  the fourth moments are unknown in practical applications, and therefore, the estimates  of the fourth moments are  necessary.  \cite{zhang2019invariant} obtained their consistent estimators by using the method of moments and RMT, more details may be found in the Theorem 2.7 in \cite{zhang2019invariant}

\end{rmk}

The  proof of  the  Theorem \ref{thm} will be postponed in Appendix \ref{sec4}.

\section{Simulation study}\label{sec3}
In this section, we conduct simulation studies to illustrate the performance of the proposed modified LRT (referred to as the ML test in the following context), compared to the tests proposed by \cite{hyodo2018simultaneous} (referred to as the HN test). 

Assume that the random samples are generated from the following model:
\begin{align*}
\mathbf{x}_{i}^{(t)}=\boldsymbol{\Sigma}_{t}^{\frac{1}{2}}\mathbf{z}_{i}^{(t)}+\bm{\mu}_{t}, \quad i=1,\ldots,N_{t}, \hspace{1ex}  t=1 ~\mbox{and}~  2,
\end{align*}
where $\mathbf{z}_{i}^{(t)}=(z_{i1}^{(t)},\ldots,z_{ip}^{(t)})'$  and $\{z_{ij}^{(t)}, j=1,\ldots,p\}$ are i.i.d. real random variables from Gamma distribution $Gamma(4,2)-2$.
Note that in the following simulations, we did not utilize the estimators $\hat{\beta}_{i}$, $i=1,2$,  and we always assume that the fourth moments of the $\{z_{ij}^{(t)}\}_{j=1}^{p}$ are known. \cite{zhang2019invariant} has provided explanations and demonstrated that the performance of the estimators is remarkable based on the numerical results. For further details, we direct readers to \cite{zhang2019invariant}.

For the mean vectors  $\bm{\mu}_{1}$ and $\bm{\mu}_{2}$ and the covariance matrices  $\boldsymbol{\Sigma}_{1}$ and $\boldsymbol{\Sigma}_{2}$, we consider two scenarios with respect to the random samples $\{\mathbf{x}_{i}^{(1)}\}_{i=1}^{N_{1}}$ and $\{\mathbf{x}_{i}^{(2)}\}_{i=1}^{N_{2}}$ :

$\bullet$ Model \Rmnum{1}: $\bm{\mu}_{1}=\bm{\mu}_{2}=\bm0_{p}$, $\boldsymbol{\Sigma}_{2}=diag(p^{2},1,\ldots,1)$, $\boldsymbol{\Sigma}_{1}=(1+a/n_{1})\boldsymbol{\Sigma}_{2}$,
where the number of $p^{2}$ is equal to 1 and $a$ is a constant, and $\bm0_{p}$  denotes  the $p$-dimensional zero vector.

$\bullet$ Model \Rmnum{2}: $\bm{\mu}_{1}=(1,p,\ldots,p)'$, $\bm{\mu}_{2}=(1,p+1,\ldots,p+1)'$, $\boldsymbol{\Sigma}_{1}=\boldsymbol{\Sigma}_{2}=diag(p^{2},1,\ldots,1)$, where the number of $p$ and $p+1$ is equal to $(p-1)$, respectively.

In each case, we perform 10,000 independent replications to estimate the empirical sizes and the empirical powers of the proposed ML test and the HN test based on different values of $(n_{1}, n_{2}, p)$, respectively, and the nominal significance level of the tests was  $\alpha=0.05$.

Table \ref{tab1} displays the empirical sizes  of the proposed ML test and the HN test under Model \Rmnum{1} with $a=0$. In addition, Table \ref{tab2} and Table \ref{tab3} present the empirical powers of the HN test and the proposed ML test for the two alternative settings, respectively.
 We observe from Table \ref{tab1} that under model \Rmnum{1} of Gamma distribution, the empirical sizes of  the proposed ML test  are closed to the significance level 0.05  when $n_{1}, n_{2}, p$  increase. In contrast, the empirical sizes of the HN test are slightly higher than ours did. This reflects that the null distribution of the test statistic of the HN test can not be approximated its asymptotic distribution well in this case. The power results in table \ref{tab2} showed the proposed ML test and the HN test had similar empirical power and and were less affected by the increased dimensionality when $y_1 <1, y_2>1$ or $y_1>1, y_2 <1$, but the proposed ML test had quite good power compared with the HN test  when $y_1 >1, y_2>1$ or $y_1<1, y_2 <1$ . The powers under the second model reported in Table \ref{tab3} increased much faster than those under the first model reported in Table \ref{tab2} as the sample size and the dimension increased, which indicated that the proposed ML test is more sensitive than the HN test in our setting.

 Fig.1-4 illustrate that the probability density curve of the standard normal distribution N(0,1) is consistent with  the histogram of the proposed ML test as  $p$  becomes larger.  The red curve represents the standard normal distribution density curve. Figures 5-8 show the divergence of powers between the two test statistics, with the parameter $a$ increasing, showing the trend from the empirical sizes to the empirical powers under Model \Rmnum{1}.




\begin{table}[h]
\caption{Empirical sizes of the tests  the proposed ML test and the HN test, based on 10,000 replications with real Gamma data under Model \Rmnum{1} with $a=0$.}\label{tab1}
\begin{tabular*}{\textwidth}{@{\extracolsep\fill}lcccccc}
\toprule%
$y_{1}>1, y_{2}>1$ &\multirow{2}{*}{$(25, 35, 40)$} &\multirow{2}{*}{$(50, 70, 80)$} &\multirow{2}{*}{$(100, 140, 160)$} &\multirow{2}{*}{$(200, 280, 320)$} \\
$(n_{1}, n_{2}, p)$   \\
\midrule
$\text{ML}$    &0.0639 &0.0566 &0.0559 &0.0496       \\
$\text{HN}$     &0.1335 &0.1431 &0.1384 &0.1474     \\
\botrule
\toprule%
$y_{1}>1, y_{2}<1$ &\multirow{2}{*}{$(25, 35, 30)$} &\multirow{2}{*}{$(50, 70, 60)$} &\multirow{2}{*}{$(100, 140, 120)$} &\multirow{2}{*}{$(200, 280, 240)$} \\
$(n_{1}, n_{2}, p)$     \\
\midrule
$\text{ML}$             &0.0572  &0.0573 &0.0533  &0.0526     \\
$\text{HN}$              &0.1375  &0.146  &0.1384  &0.1461    \\
\botrule
\toprule%
$y_{1}<1, y_{2}>1$ &\multirow{2}{*}{$(35, 25, 30)$} &\multirow{2}{*}{$(70, 50, 60)$} &\multirow{2}{*}{$(140, 100, 120)$} &\multirow{2}{*}{$(280, 200, 240)$} \\
$(n_{1}, n_{2}, p)$     \\
\midrule
$\text{ML}$                &0.0694   &0.063    &0.0582   &0.0498     \\
$\text{HN}$               &0.14     &0.1383   &0.1407   &0.1399     \\
\botrule
\toprule%
$y_{1}<1, y_{2}<1$ &\multirow{2}{*}{$(25, 35, 20)$} &\multirow{2}{*}{$(50, 70, 40)$} &\multirow{2}{*}{$(100, 140, 80)$} &\multirow{2}{*}{$(200, 280, 160)$} \\
$(n_{1}, n_{2}, p)$     \\
\midrule
$\text{ML}$             &0.062  &0.0549 &0.0539  &0.0543     \\
$\text{HN}$             &0.1313 &0.138  &0.1509  &0.1501    \\
\botrule
\end{tabular*}
\end{table}

\begin{table}[h]
\caption{Empirical powers of the tests the proposed ML test and the HN test, based on 10,000 replications with real Gamma data under Model \Rmnum{1}}.\label{tab2}
\begin{tabular*}{\textwidth}{@{\extracolsep\fill}lccccccccc}
\toprule%
   & \multirow{2}{*}{$(n_{1}, n_{2}, p)$} & \multicolumn{2}{@{}c@{}}{$a=5$} & \multicolumn{2}{@{}c@{}}{$a=10$} & \multicolumn{2}{@{}c@{}}{$a=15$}& \multicolumn{2}{@{}c@{}}{$a=20$}  \\
    \cmidrule{3-4}\cmidrule{5-6}\cmidrule{7-8}\cmidrule{9-10}%
&  &$\text{ML}$ &$\text{HN}$ &$\text{ML}$ &$\text{HN}$ &$\text{ML}$ &$\text{HN}$ &$\text{ML}$ &$\text{HN}$  \\
\midrule
$y_{1}>1$          &(25,35,40)        &0.2271    &0.1405     &0.4884    &0.1769    &0.752     &0.2478   &0.9061    &0.3066    \\
$y_{2}>1$          & (50,70,80)       &0.2153    &0.145      &0.5129    &0.1707    &0.7863    &0.2147   &0.9416    &0.2623   \\
                   &(100,140,160)     &0.2202    &0.1423     &0.5168    &0.161     &0.7954    &0.1822   &0.9514    &0.2065    \\
                   &(200,280,320)     &0.2139    &0.1491     &0.5232    &0.1586    &0.8096    &0.1759   &0.9548    &0.1916   \\
\botrule
\toprule%
   & \multirow{2}{*}{$(n_{1}, n_{2}, p)$} & \multicolumn{2}{@{}c@{}}{$a=5$} & \multicolumn{2}{@{}c@{}}{$a=10$} & \multicolumn{2}{@{}c@{}}{$a=15$}& \multicolumn{2}{@{}c@{}}{$a=20$}  \\
    \cmidrule{3-4}\cmidrule{5-6}\cmidrule{7-8}\cmidrule{9-10}%
&  &$\text{ML}$ &$\text{HN}$ &$\text{ML}$ &$\text{HN}$ &$\text{ML}$ &$\text{HN}$ &$\text{ML}$ &$\text{HN}$  \\
\midrule
$y_{1}>1$       &(25,35,30)      &0.1307     &0.1407    &0.2615    &0.1845    &0.4316    &0.2358     &0.6376    &0.3104 \\
$y_{2}<1$       &(50,70,60)      &0.1266     &0.1454    &0.2607    &0.1742    &0.4455    &0.2102     &0.6425    &0.2619\\
                &(100,140,120)   &0.1229     &0.14      &0.2454    &0.1559    &0.4358    &0.1859     &0.6292    &0.2173 \\
                &(200,280,240)   &0.1255     &0.1447    &0.2467    &0.1555    &0.4287    &0.1709     &0.6146    &0.1804  \\
\botrule
\toprule%
   & \multirow{2}{*}{$(n_{1}, n_{2}, p)$} & \multicolumn{2}{@{}c@{}}{$a=5$} & \multicolumn{2}{@{}c@{}}{$a=10$} & \multicolumn{2}{@{}c@{}}{$a=15$}& \multicolumn{2}{@{}c@{}}{$a=20$}  \\
    \cmidrule{3-4}\cmidrule{5-6}\cmidrule{7-8}\cmidrule{9-10}%
&  &$\text{ML}$ &$\text{HN}$ &$\text{ML}$ &$\text{HN}$ &$\text{ML}$ &$\text{HN}$ &$\text{ML}$ &$\text{HN}$  \\
\midrule
$y_{1}<1$       &(35,25,30)      &0.1124  &0.1538  &0.1463  &0.2044   &0.1843    &0.2531    &0.2121  &0.3028    \\
$y_{2}>1$       &(70,50,60)      &0.1018  &0.1478  &0.1557  &0.1856   &0.2116    &0.2113    &0.274   &0.2437    \\
                &(140,100,120)   &0.1017  &0.1554  &0.1575  &0.1645   &0.2372    &0.1841    &0.3074  &0.1999    \\
                &(280,200,240)   &0.102   &0.1518  &0.1587  &0.1605   &0.2419    &0.1679    &0.3519  &0.1695    \\
\botrule
\toprule%
   & \multirow{2}{*}{$(n_{1}, n_{2}, p)$} & \multicolumn{2}{@{}c@{}}{$a=20$} & \multicolumn{2}{@{}c@{}}{$a=40$} & \multicolumn{2}{@{}c@{}}{$a=60$}& \multicolumn{2}{@{}c@{}}{$a=80$}  \\
    \cmidrule{3-4}\cmidrule{5-6}\cmidrule{7-8}\cmidrule{9-10}%
&  &$\text{ML}$ &$\text{HN}$ &$\text{ML}$ &$\text{HN}$ &$\text{ML}$ &$\text{HN}$ &$\text{ML}$ &$\text{HN}$  \\
\midrule
$y_{1}<1$       &(25,35,20)      &0.2594     &0.3225    &0.7719    &0.6086    &0.9825     &0.8227     &0.9998    &0.9198  \\
$y_{2}<1$       &(50,70,40)      &0.1628     &0.2562    &0.6072    &0.5245    &0.956      &0.7605     &0.9986    &0.8929\\
                &(100,140,80)    &0.1115     &0.221     &0.3607    &0.4123    &0.7895     &0.6354     &0.9862    &0.8  \\
                &(200,280,160)   &0.0795     &0.1867    &0.1903    &0.3043    &0.4757     &0.476      &0.8164    &0.652\\
\botrule
\end{tabular*}
\end{table}

\begin{table}[h]
\caption{Empirical powers of the tests the proposed ML test and the HN test, based on 10,000 replications with real Gamma data under Model \Rmnum{2}}.\label{tab3}
\begin{tabular*}{\textwidth}{@{\extracolsep\fill}lcccccc}
\toprule%
$y_{1}>1, y_{2}>1$ &\multirow{2}{*}{$(25, 35, 40)$} &\multirow{2}{*}{$(50, 70, 80)$} &\multirow{2}{*}{$(100, 140, 160)$} &\multirow{2}{*}{$(200, 280, 320)$} \\
$(n_{1}, n_{2}, p)$   \\
\midrule
$\text{ML}$      &0.8896 &0.984  &0.9985 &0.9999       \\
$\text{HN}$     &0.1542 &0.1525 &0.1509 &0.1558     \\
\botrule
\toprule%
$y_{1}>1, y_{2}<1$ &\multirow{2}{*}{$(25, 35, 30)$} &\multirow{2}{*}{$(50, 70, 60)$} &\multirow{2}{*}{$(100, 140, 120)$} &\multirow{2}{*}{$(200, 280, 240)$} \\
$(n_{1}, n_{2}, p)$     \\
\midrule
$\text{ML}$               &0.7217 &0.9197  &0.9868  &0.9987    \\
$\text{HN}$              &0.165  &0.1692  &0.1605  &0.1626   \\
\botrule
\toprule%
$y_{1}<1, y_{2}>1$ &\multirow{2}{*}{$(35, 25, 30)$} &\multirow{2}{*}{$(70, 50, 60)$} &\multirow{2}{*}{$(140, 100, 120)$} &\multirow{2}{*}{$(280, 200, 240)$} \\
$(n_{1}, n_{2}, p)$     \\
\midrule
$\text{ML}$                &0.7168    &0.9146    &0.985    &0.999     \\
$\text{HN}$               &0.1545    &0.1588    &0.1687   &0.1634     \\
\botrule
\toprule%
$y_{1}<1, y_{2}<1$ &\multirow{2}{*}{$(25, 35, 20)$} &\multirow{2}{*}{$(50, 70, 40)$} &\multirow{2}{*}{$(100, 140, 80)$} &\multirow{2}{*}{$(200, 280, 160)$} \\
$(n_{1}, n_{2}, p)$     \\
\midrule
$\text{ML}$             &0.8778  &0.9897 &0.9998  &1     \\
$\text{HN}$            &0.1757  &0.1736 &0.1716  &0.1684    \\
\botrule
\end{tabular*}
\end{table}

\begin{figure}[htbp]
	\centering
	\subfigure{\includegraphics[height=3cm,width=3cm]{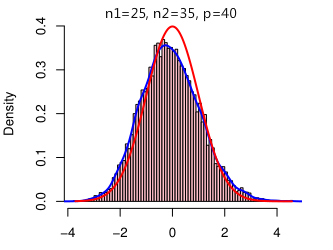}}
	\subfigure{\includegraphics[height=3cm,width=3cm]{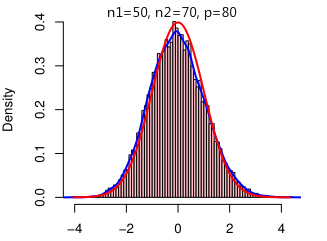}}
	\subfigure{\includegraphics[height=3cm,width=3cm]{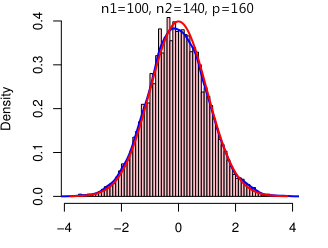}}
	\subfigure{\includegraphics[height=3cm,width=3cm]{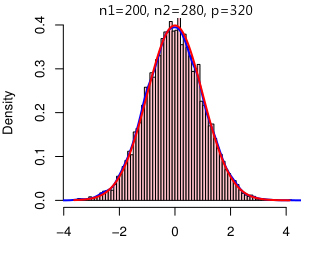}}
	\caption{
   The pictures show that for different $n_{1}, n_{2}, p$, that is when $y_{1}>1, y_{2}>1$, the probability density curve of the standard normal distribution N(0,1) is consistent with  the histogram of the proposed ML test as  $p$  becomes larger.  }
	\label{fig1}
\end{figure}

\begin{figure}[htbp]
	\centering
	\subfigure{\includegraphics[height=3cm,width=3cm]{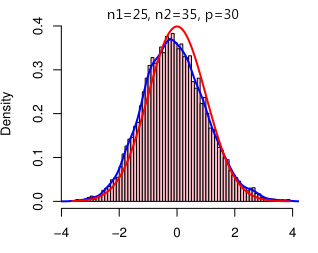}}
	\subfigure{\includegraphics[height=3cm,width=3cm]{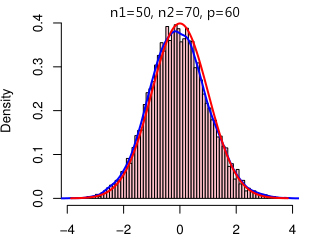}}
	\subfigure{\includegraphics[height=3cm,width=3cm]{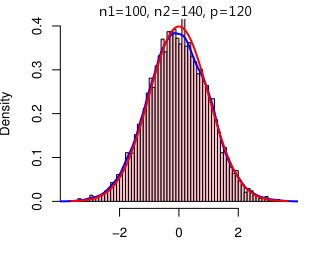}}
	\subfigure{\includegraphics[height=3cm,width=3cm]{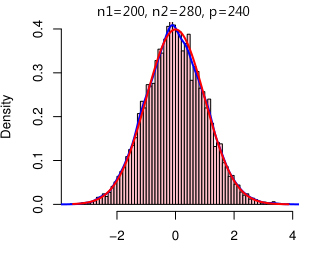}}
	\caption{
        The pictures show that for different $n_{1}, n_{2}, p$, that is when $y_{1}>1, y_{2}<1$, the probability density curve of the standard normal distribution N(0,1) is consistent with  the histogram of the proposed ML test as  $p$  becomes larger.  }
	\label{fig2}
\end{figure}

\begin{figure}[htbp]
	\centering
	\subfigure{\includegraphics[height=3cm,width=3cm]{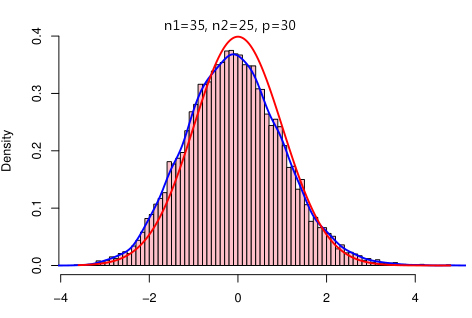}}
	\subfigure{\includegraphics[height=3cm,width=3cm]{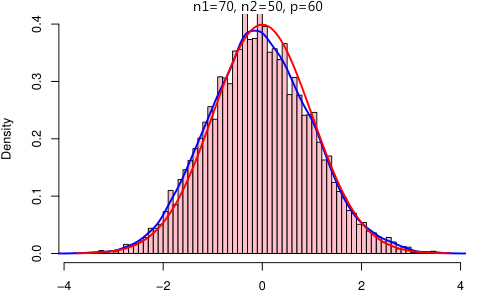}}
	\subfigure{\includegraphics[height=3cm,width=3cm]{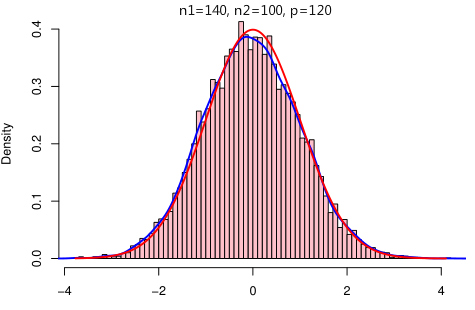}}
	\subfigure{\includegraphics[height=3cm,width=3cm]{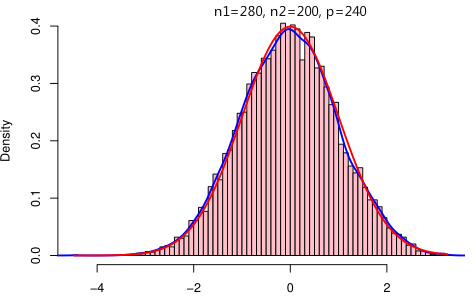}}
	\caption{
         The pictures show that for different $n_{1}, n_{2}, p$, that is when $y_{1}<1, y_{2}>1$, the probability density curve of the standard normal distribution N(0,1) is consistent with  the histogram of the proposed ML test as  $p$  becomes larger.}
	\label{fig3}
\end{figure}

\begin{figure}[htbp]
	\centering
	\subfigure{\includegraphics[height=3cm,width=3cm]{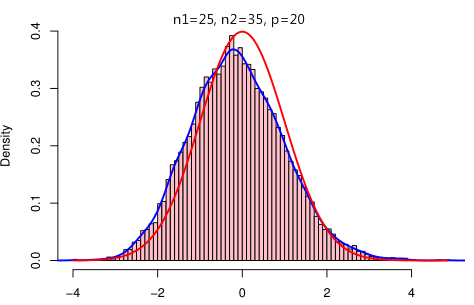}}
	\subfigure{\includegraphics[height=3cm,width=3cm]{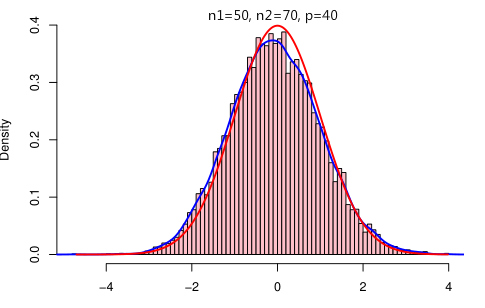}}
	\subfigure{\includegraphics[height=3cm,width=3cm]{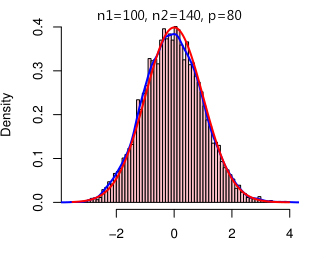}}
	\subfigure{\includegraphics[height=3cm,width=3cm]{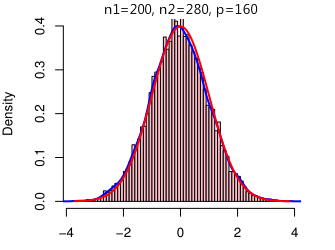}}
	\caption{
		   The pictures show that for different $n_{1}, n_{2}, p$, that is when $y_{1}<1, y_{2}<1$, the probability density curve of the standard normal distribution N(0,1) is consistent with  the histogram of the proposed ML test as  $p$  becomes larger .}
	\label{fig4}
\end{figure}

\begin{figure}[htbp]
	\centering
	\subfigure{\includegraphics[width=0.35\textwidth]{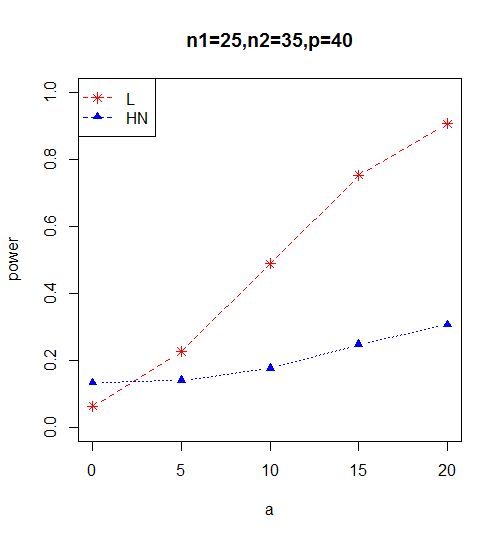}}
	\subfigure{\includegraphics[width=0.35\textwidth]{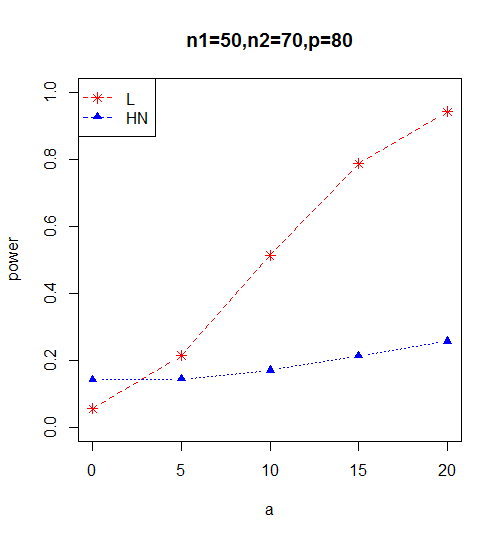}}
	\subfigure{\includegraphics[width=0.35\textwidth]{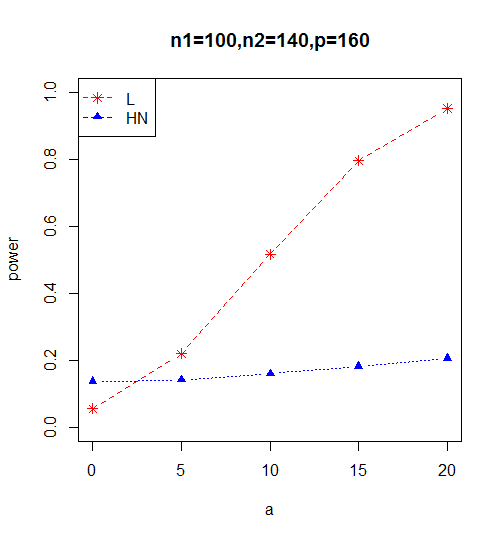}}
	\subfigure{\includegraphics[width=0.35\textwidth]{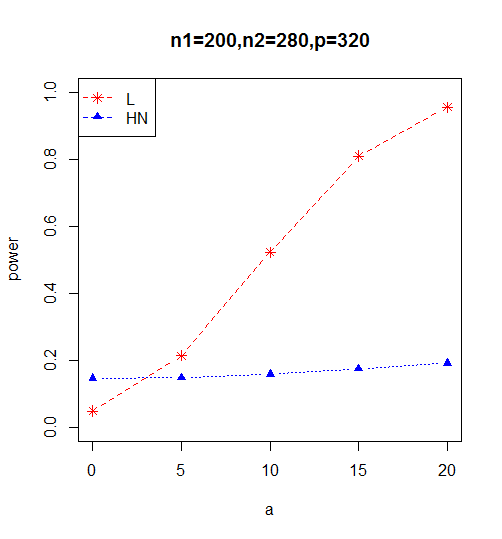}}
	\caption{
		 The  empirical powers are estimated based on 10,000 replications with real Gamma variables  and these results are based on the significance level of $\alpha=0.05$
		under Model \Rmnum{1} (note that our proposed test is abbreviated to L in the picture).}
	\label{fig5}
\end{figure}

\begin{figure}[htbp]
	\centering
	\subfigure{\includegraphics[width=0.35\textwidth]{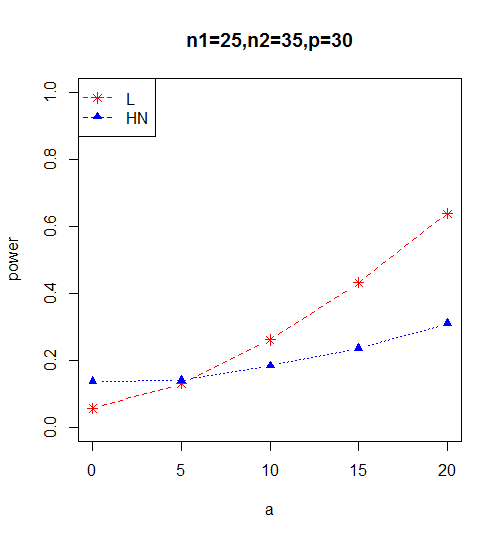}}
	\subfigure{\includegraphics[width=0.35\textwidth]{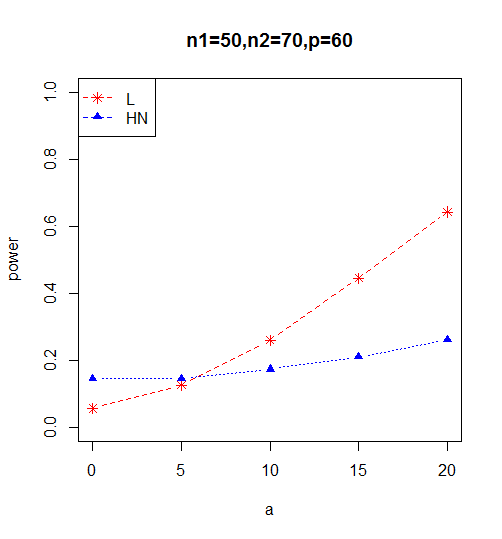}}
	\subfigure{\includegraphics[width=0.35\textwidth]{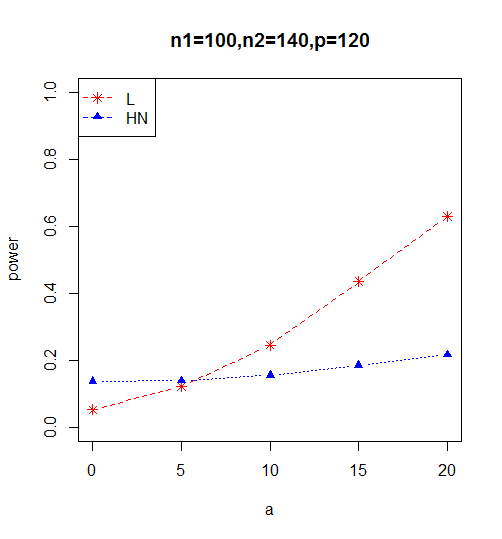}}
	\subfigure{\includegraphics[width=0.35\textwidth]{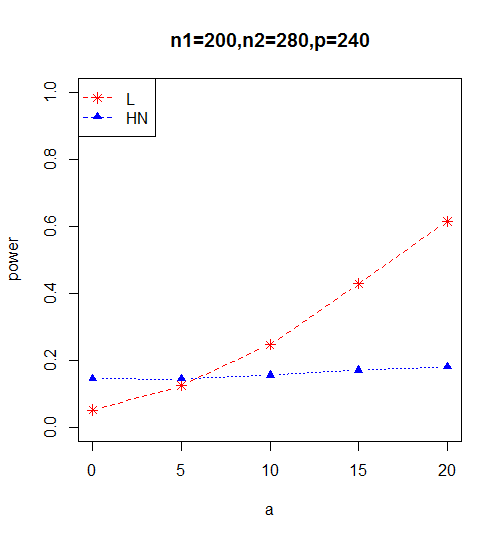}}
	\caption{
	 The  empirical powers are estimated based on 10,000 replications with real Gamma variables  and these results are based on the significance level of $\alpha=0.05$ under Model \Rmnum{1} (note that our proposed test is abbreviated to L in the picture).}
	\label{fig6}
\end{figure}

\begin{figure}[htbp]
	\centering
	\subfigure{\includegraphics[width=0.35\textwidth]{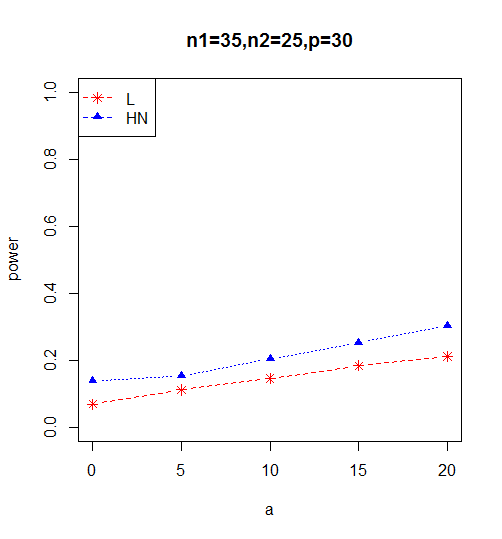}}
	\subfigure{\includegraphics[width=0.35\textwidth]{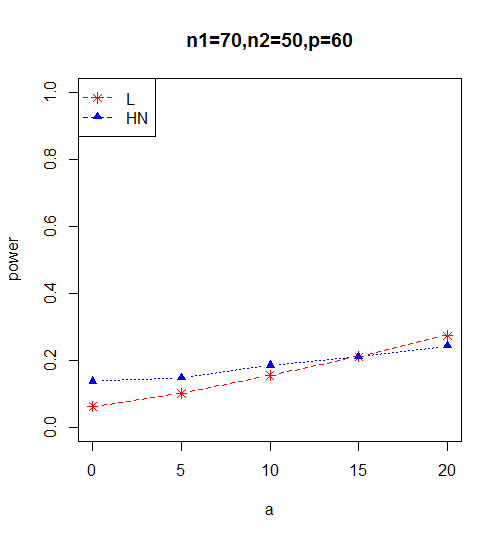}}
	\subfigure{\includegraphics[width=0.35\textwidth]{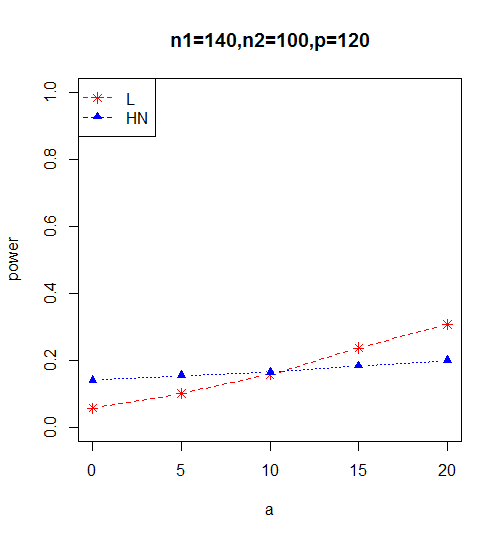}}
	\subfigure{\includegraphics[width=0.35\textwidth]{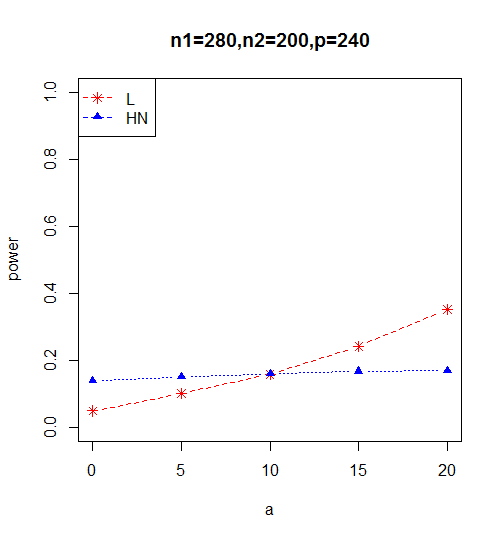}}
	\caption{
	 The empirical powers are estimated based on 10,000 replications with real Gamma variables  and these results are based on the significance level of $\alpha=0.05$ under Model \Rmnum{1} (note that our proposed test is abbreviated to L in the picture).}
	\label{fig7}
\end{figure}

\begin{figure}[htbp]
	\centering
	\subfigure{\includegraphics[width=0.47\textwidth]{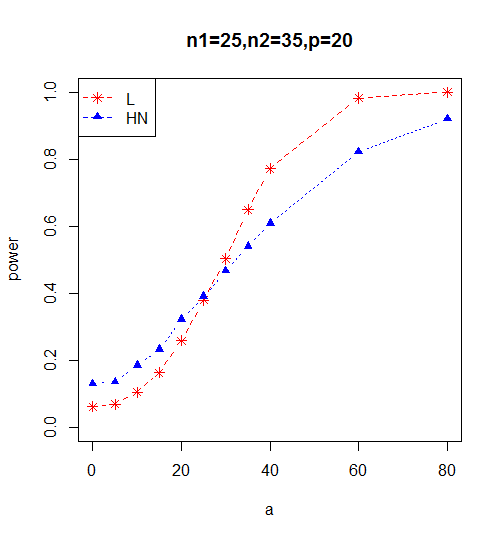}}
	\subfigure{\includegraphics[width=0.47\textwidth]{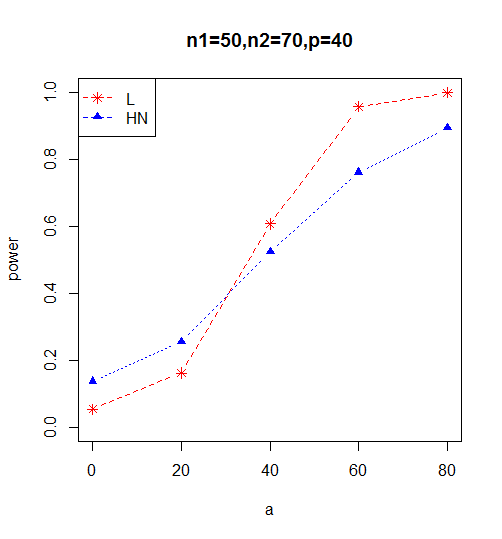}}
	\subfigure{\includegraphics[width=0.47\textwidth]{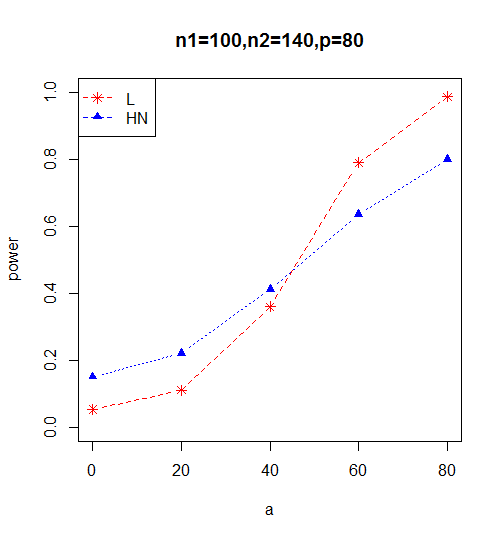}}
	\subfigure{\includegraphics[width=0.47\textwidth]{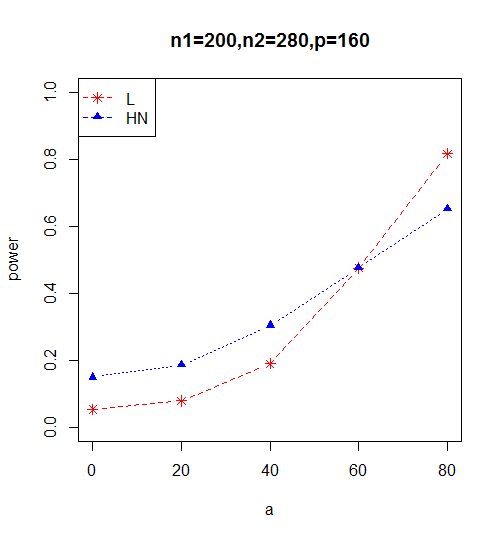}}
	\caption{
	 The empirical powers are estimated based on 10,000 replications with real Gamma variables  and these results are based on the significance level of $\alpha=0.05$ under Model \Rmnum{1} (note that our proposed test is abbreviated to L in the picture). }
	\label{fig8}
\end{figure}

\section{Conclusions and discussions}\label{sec5}

This paper is concerned with the new modified LRT for simultaneous testing of equalities of mean vectors and covariance matrices in high-dimensional settings. We prove that the modified LRT converges in distribution to the normal distribution under the null hypothesis. Simulation results show that the performance of the modified LRT is remarkable under applicable conditions, especially in Model \Rmnum{1} (we may refer to as the unbounded spectral norm of the population covariance matrix).

In this paper, we consider the unstructured covariance matrix. However, various models with specific covariance structures naturally appear in the context of repeated measurements with high-dimensional data, such as first-order autoregressive (AR),
moving average (MA), compound symmetry (CS), and so on. It remains to be a theoretical interest to study simultaneous testing of mean vectors and covariance matrices under such structured covariance models.
In addition, due to the lack of progress in RMT, we do not consider the asymptotic distribution of modified LRT under the alternative hypothesis in our current work. These will be continued for our future research.

\backmatter




\bmhead{Funding}
Niu's research was supported by the Natural Science Foundation of Shandong Province (No. ZR2021QA077);
Luo's research was supported by Statistical Research Project of Zhejiang Province   (No.  22TJJD05);
Bai's research was supported by National Natural Science Foundation of China (No. 12271536), National Natural Science Foundation of China (No. 12171198), and Natural Science Foundation of Jilin Province (No. 20210101147JC).

\section*{Declarations}
\bmhead{ Conflictof interest} On behalf of all authors, the corresponding author states that there is no conflict of interest.

\begin{appendices}

\section{Proofs of the main results}\label{sec4}

In this section, we provide the proof of Theorem \ref{thm}. We begin by enumerating key results  from RMT, which will be useful for our proof.
\begin{lem}\label{lem3}
	For any positive defined matrix  $\mathbf{W} \in \mathbb{R}^{n\times n}$ and $ \mathbf{q}\in\mathbb{R}^{n}$, $\mathbf{W}+\mathbf{q}\mathbf{q}'$ are invertible, we have
	\begin{align*}
	&(\mathbf{W}+\mathbf{q}\mathbf{q}^{\prime})^{-1}=\mathbf{W}^{-1}-\frac{\mathbf{W}^{-1}\mathbf{q}\mathbf{q}'\mathbf{W}^{-1}}{1+\mathbf{q}'\mathbf{W}^{-1}\mathbf{q}},\\
	&\mathbf{q}'(\mathbf{W}+\mathbf{q}\mathbf{q}')^{-1}=\frac{\mathbf{q}'\mathbf{W}^{-1}}{1+\mathbf{q}'\mathbf{W}^{-1}\mathbf{q}}.
	\end{align*}

\end{lem}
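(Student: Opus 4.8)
The plan is to recognize both displayed identities as the Sherman--Morrison rank-one update formula and to prove them by direct algebraic verification, without invoking any external theorem. Because $\mathbf{W}$ is positive definite it is invertible, and the scalar $\alpha := \mathbf{q}'\mathbf{W}^{-1}\mathbf{q}$ is nonnegative, so $1+\alpha>0$ and every right-hand side is well defined.

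For the first identity I would set $\mathbf{M} := \mathbf{W}^{-1} - (1+\alpha)^{-1}\mathbf{W}^{-1}\mathbf{q}\mathbf{q}'\mathbf{W}^{-1}$ and verify directly that $(\mathbf{W}+\mathbf{q}\mathbf{q}')\mathbf{M}=\mathbf{I}_{n}$. Expanding the product into four terms, the term $\mathbf{W}\mathbf{W}^{-1}$ contributes $\mathbf{I}_{n}$, while the three remaining terms all carry the common right factor $\mathbf{q}\mathbf{q}'\mathbf{W}^{-1}$; their scalar coefficients are $-(1+\alpha)^{-1}$, $1$, and $-\alpha(1+\alpha)^{-1}$, whose sum is $[-1+(1+\alpha)-\alpha]/(1+\alpha)=0$. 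Hence $(\mathbf{W}+\mathbf{q}\mathbf{q}')\mathbf{M}=\mathbf{I}_{n}$, and since $\mathbf{W}+\mathbf{q}\mathbf{q}'$ is assumed invertible this identifies $\mathbf{M}$ as its inverse. The second identity then follows immediately by left-multiplying the first by $\mathbf{q}'$: this yields $\mathbf{q}'\mathbf{W}^{-1}\bigl(1-\alpha(1+\alpha)^{-1}\bigr)=\mathbf{q}'\mathbf{W}^{-1}(1+\alpha)^{-1}$, which is precisely the claimed form.

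There is no genuine obstacle here; the only point requiring attention is the bookkeeping of scalar-versus-matrix factors, in particular using that $\alpha=\mathbf{q}'\mathbf{W}^{-1}\mathbf{q}$ is a scalar that can be pulled out of the matrix products so that the three rank-one terms combine cleanly. This elementary identity is recorded at the outset because it is the workhorse for the rank-one (leave-one-out type) decompositions of the quadratic forms and resolvents that arise later in the RMT analysis of the statistic $L$, in particular when handling the term $\mathrm{T}_{n}$ and the sample covariance matrices $\mathbf{S}_{1},\mathbf{S}_{2}$.
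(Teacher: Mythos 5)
Your proof is correct: the direct verification that $(\mathbf{W}+\mathbf{q}\mathbf{q}')\mathbf{M}=\mathbf{I}_{n}$ with the three rank-one terms cancelling, followed by left-multiplication by $\mathbf{q}'$, is exactly the standard Sherman--Morrison argument, and your observation that positive definiteness of $\mathbf{W}$ guarantees $1+\mathbf{q}'\mathbf{W}^{-1}\mathbf{q}>0$ handles the only point of well-definedness. The paper itself states this lemma without proof as a known identity, so there is no authorial argument to compare against; your verification is a complete and appropriate substitute.
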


\begin{lem}[ Theorem 2.1 in \cite{zhang2019invariant}]\label{lem1}
	To test the hypothesis
	$\mathrm{H}_{0a}:\boldsymbol{\Sigma}_{1}=\boldsymbol{\Sigma}_{2}  \hspace{1ex} \mathrm{v.s.}  \hspace{1ex} \mathrm{H}_{1a}:\boldsymbol{\Sigma}_{1} \neq \boldsymbol{\Sigma}_{2}$,
	assuming that the conditions of  Theorem \ref{thm} hold.
	Then  under  the null hypothesis $\mathbf{H}_{0a}$,
	\begin{eqnarray*}
		\frac{ZH-p\cdot l_{n}-\mu_{n}}{\nu_{n}} \xrightarrow{D} N(0,1),
	\end{eqnarray*}
	where
	\begin{eqnarray*}\label{ZH}
		ZH=\sum\limits_{\lambda_{i}^{\mathbf{B}_{n}}\in(0,1)}[c_{1}\log\lambda_{i}^{\mathbf{B}_{n}} + c_{2} \log(1-\lambda_{i}^{\mathbf{B}_{n}})],
	\end{eqnarray*}
	$l_{n}$, $\mu_{n}$ and $\nu_{n}$ are defined in Theorem \ref{thm}.
	
\end{lem}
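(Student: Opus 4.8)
The plan is to recognize $ZH$ as a linear spectral statistic (LSS) of the MANOVA-type matrix $\mathbf{B}_{n}=n_{1}\mathbf{S}_{1}(n_{1}\mathbf{S}_{1}+n_{2}\mathbf{S}_{2})^{-1}$ and to invoke the random-matrix CLT machinery for such statistics. Writing $f(x)=c_{1}\log x+c_{2}\log(1-x)$, we have $ZH=\sum_{\lambda_{i}^{\mathbf{B}_{n}}\in(0,1)}f(\lambda_{i}^{\mathbf{B}_{n}})$, which equals $p\int f\,dF^{\mathbf{B}_{n}}$ once the trivial atoms at $0$ and $1$ are discarded. First I would reduce to the isotropic case: under $\mathrm{H}_{0a}$ we have $\boldsymbol{\Sigma}_{1}=\boldsymbol{\Sigma}_{2}=\boldsymbol{\Sigma}$, and since $\mathbf{S}_{t}=\boldsymbol{\Sigma}^{1/2}\mathbf{S}_{t}^{\mathbf{z}}\boldsymbol{\Sigma}^{1/2}$, the common factor cancels in $\mathbf{B}_{n}$, so its eigenvalues coincide with those of $n_{1}\mathbf{S}_{1}^{\mathbf{z}}(n_{1}\mathbf{S}_{1}^{\mathbf{z}}+n_{2}\mathbf{S}_{2}^{\mathbf{z}})^{-1}$ and do not depend on $\boldsymbol{\Sigma}$. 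Hence without loss of generality $\boldsymbol{\Sigma}=\mathbf{I}_{p}$, and $\mathbf{B}_{n}$ becomes the standard Beta/MANOVA matrix built from the standardized samples.

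The eigenvalues of $\mathbf{B}_{n}$ are the M\"obius image $\lambda=\phi/(\phi+n_{2}/n_{1})$ of the eigenvalues $\phi$ of the Fisher matrix $\mathbf{S}_{1}^{\mathbf{z}}(\mathbf{S}_{2}^{\mathbf{z}})^{-1}$; this links the problem to the well-studied $F$-matrix ensemble, whose limiting spectral distribution is the Wachter (MANOVA) law, and whose atoms at $0$ (mass $\max\{0,1-1/y_{1}\}$) and at $1$ (mass $\max\{0,1-1/y_{2}\}$) are exactly the boundary eigenvalues removed by the restriction to $(0,1)$. Next I would apply the CLT for LSS of large-dimensional $F$/Beta matrices in the Bai--Silverstein / Zheng framework, extended to general (non-Gaussian) entries. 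Expressing the centered statistic as a contour integral $ZH-\mathbb{E}ZH=-\tfrac{1}{2\pi i}\oint f(z)[m_{\mathbf{B}_{n}}(z)-\mathbb{E}m_{\mathbf{B}_{n}}(z)]\,dz$ with $m_{\mathbf{B}_{n}}$ the Stieltjes transform, the general theory guarantees that $m_{\mathbf{B}_{n}}-\mathbb{E}m_{\mathbf{B}_{n}}$ converges to a Gaussian field whose covariance kernel carries a fourth-cumulant term proportional to $\beta_{t}$. This yields asymptotic Gaussianity of $ZH$: the $O(p)$ deterministic part of the mean is $p\int f\,dF_{y_{1},y_{2}}=p\cdot l_{n}$, the $O(1)$ mean correction is $\mu_{n}$ (including the $\beta_{t}\Psi$ summands from the fourth moments), and the limiting variance is $\nu_{n}^{2}$ (again with a $\beta_{t}$-dependent piece). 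The closed forms $l_{n},\mu_{n},\nu_{n}$ are then obtained by evaluating the governing single and double contour integrals of $f$ against the companion Stieltjes transform, using the trigonometric parametrization of the support edges to reduce everything to elementary integrals involving $h$, $c_{1}$, $c_{2}$.

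The main obstacle is that $f(x)=c_{1}\log x+c_{2}\log(1-x)$ is singular at both endpoints of $[0,1]$. When $y_{1}<1$ and $y_{2}<1$ the bulk support of the Wachter law lies strictly inside $(0,1)$, so $f$ is analytic on a neighborhood of the support and the LSS-CLT applies verbatim. The delicate regime is $y_{t}>1$, where the support edge touches $0$ (if $y_{1}>1$) or $1$ (if $y_{2}>1$) and $f$ blows up precisely at the spectral edge, so the contour cannot be drawn around the support without enclosing the singularity. Handling this requires either a careful regularization of the contour as it approaches the edge or a direct accounting of the edge contribution to the integrals, and it is exactly this edge analysis that produces the indicator terms $\delta_{y_{1}>1}$ and $\delta_{y_{2}>1}$ appearing in the auxiliary functions $l(\cdot,\cdot)$, $u(\cdot,\cdot)$, $v(\cdot,\cdot)$ and hence in $l_{n},\mu_{n},\nu_{n}$. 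Verifying that the non-Gaussian fourth-moment corrections integrate to the stated $\beta_{t}$ terms is the remaining bookkeeping once the contours are correctly set up.
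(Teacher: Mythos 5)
You cannot be faulted for not matching the paper's proof here, because the paper has none: Lemma~\ref{lem1} is imported verbatim as Theorem~2.1 of \cite{zhang2019invariant}, and the appendix treats it as a black box --- the only work the authors do themselves is the analysis of $\mathrm{T}_n$. Judged as a reconstruction of the cited result, your blueprint is the right one and is consistent with how that literature proceeds: the reduction to $\boldsymbol{\Sigma}=\mathbf{I}_p$ by invariance of the spectrum of $\mathbf{B}_n$ under $\mathrm{H}_{0a}$ is correct; the atoms at $0$ and $1$ with masses $1-1/y_1$ and $1-1/y_2$ match the rank count ($p-n_1$ zero eigenvalues, $p-n_2$ unit eigenvalues) stated in Section~2 of the paper; the M\"obius map $\lambda=\phi/(\phi+n_2/n_1)$ to the Fisher ensemble is right; and the fourth-cumulant contributions to mean and variance are indeed where the $\beta_1,\beta_2$ terms come from.

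Two caveats if your text is meant to stand as a proof rather than a plan. First, the step you flag as ``delicate'' is in fact the entire difficulty: when $y_t>1$ the singularity of $f(x)=c_1\log x+c_2\log(1-x)$ sits exactly at the edge of the limiting support, so the Bai--Silverstein/Zheng contour CLT does not apply verbatim, and ``careful regularization of the contour'' is a placeholder for a genuinely hard edge analysis --- it is precisely this analysis that generates $l(\cdot,\cdot)$, $u(\cdot,\cdot)$, $v(\cdot,\cdot)$ and all the $\delta_{y_t>1}$ terms, and you do not execute it. Second, the matrices $\mathbf{S}_t$ here are centered at sample means, i.e., built with $n_t=N_t-1$ degrees of freedom; the $O(1)$ centering term $\mu_n$ is sensitive to this, and an appeal to a generic LSS CLT stated for non-centered sample covariances would get it wrong. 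This is the ``substitution principle'' of \cite{zheng2015substitution}, which the paper invokes elsewhere (Lemma~\ref{lem2}) and which your sketch omits. So: right route, correct structural ingredients, but the two decisive technical steps are named rather than carried out --- which is acceptable only because the paper itself discharges the lemma by citation.
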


\begin{lem}[Lemma 5.3 in  \cite{zheng2015substitution}]\label{lem2}
	For $t=1,2$, let
	\begin{eqnarray*}
		&&\mathbf{\Delta}^{(t)}=\frac{1}{n_{t}N_{t}}\sum_{j\neq k \in \mathcal{U} }\mathbf{z}_{j}^{(t)}(\mathbf{z}_{k}^{(t)})',
		\quad  \mathbf{B}^{\mathbf{z}}_{t}=\frac{1}{N_{t}}\sum_{j=1}^{N_{t}}\mathbf{z}_{j}^{(t)}(\mathbf{z}_{j}^{(t)})',
		\quad  \mathbf{A}_{t}(z)=\mathbf{B}_{t}^{\mathbf{z}}-z\mathbf{I}_{p}.
	\end{eqnarray*}
	After truncation and normalization, we have
	\begin{eqnarray*}
		\mathbb{E}|\mathrm{tr}(\mathbf{A}_{t}^{-1}(z)\mathbf{\Delta}^{(t)})|^{2}=\mathbb{E}|\frac{1}{n_{t}N_{t}}\sum_{j\neq k \in \mathcal{U}}(\mathbf{z}_{j}^{(t)})'\mathbf{A}_{t}^{-1}(z)\mathbf{z}_{k}^{(t)}|^{2} \leq K N_{t}^{-1}
	\end{eqnarray*}
	for every $z\in \mathbb{C}^{+}$. Especially for every $z\in \mathbb{C}^{+}$,
	\begin{eqnarray*}
		\mathbb{E}|\mathrm{tr}(\mathbf{A}_{t}^{-2}(z)\mathbf{\Delta}^{(t)})|^{2}=\mathbb{E}|\frac{1}{n_{t}N_{t}}\sum_{j\neq k \in \mathcal{U}}(\mathbf{z}_{j}^{(t)})'\mathbf{A}_{t}^{-2}(z)\mathbf{z}_{k}^{(t)}|^{2}=O(N_{t}^{-1}), \quad \mathcal{U}=\{ 1, 2, \ldots, N_{t} \}.
	\end{eqnarray*}
\end{lem}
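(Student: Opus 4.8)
The plan is to establish both bounds by the standard random-matrix technique of \emph{decoupling} the bilinear forms from the resolvent. Fix $t$ and suppress it together with the superscript on $\mathbf\Delta$, writing $N=N_t$, $n=n_t$, $\mathbf A=\mathbf A_t(z)=\mathbf B_t^{\mathbf z}-z\mathbf I_p$, $z=u+\mathrm iv$ with $v>0$, and $S:=\mathrm{tr}(\mathbf A^{-1}\mathbf\Delta)$. After the truncation and normalization step the entries $z_{ij}$ may be taken uniformly bounded, centered, with unit variance up to negligible error, so the bounded fourth moments of Assumption~A are available; I shall use throughout the deterministic resolvent bounds $\|\mathbf A^{-1}\|\le v^{-1}$ and $\|\mathbf A^{-2}\|\le v^{-2}$. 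The starting point is the identity $S=\frac{1}{nN}\sum_{j\neq k}\mathbf z_k'\mathbf A^{-1}\mathbf z_j$ recorded in the statement; moreover, since $\mathbf B_t^{\mathbf z}$ is real symmetric, $\mathbf A$ is complex symmetric and hence $\mathbf z_k'\mathbf A^{-1}\mathbf z_j=\mathbf z_j'\mathbf A^{-1}\mathbf z_k$.

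First I would decouple each summand from the two vectors it contains. Let $\mathbf A_{jk}(z)$ be built from $\mathbf B_t^{\mathbf z}$ after deleting $\mathbf z_j$ and $\mathbf z_k$, so that $\mathbf A_{jk}$ is independent of the pair $(\mathbf z_j,\mathbf z_k)$. Applying the two Sherman--Morrison identities of Lemma~\ref{lem3} successively --- peeling $\mathbf z_j$ off $\mathbf A$ and then $\mathbf z_k$ off the intermediate resolvent --- gives the exact factorization
\begin{equation*}
\mathbf z_k'\mathbf A^{-1}\mathbf z_j=\beta_j\,\beta_k\,\mathbf z_k'\mathbf A_{jk}^{-1}\mathbf z_j,
\end{equation*}
where $\beta_j,\beta_k$ are the scalar Sherman--Morrison denominators. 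The crucial point is that these weights are uniformly controlled: since $\mathbf B_t^{\mathbf z}\succeq0$ one checks $\Im\big(z\,\beta^{-1}\big)\ge v$, whence $|\beta_j|,|\beta_k|\le|z|/v$. Thus every summand is a uniformly bounded weight times the \emph{decoupled} bilinear form $\mathbf z_k'\mathbf A_{jk}^{-1}\mathbf z_j$, which is conditionally centered: given the vectors other than $\mathbf z_j,\mathbf z_k$, it has mean zero because $\mathbf z_j,\mathbf z_k$ are independent and centered.

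The rate then comes from an orthogonality count. Expanding $\mathbb E|S|^2$ as a sum over pairs $(j,k)$ and $(j',k')$ and using the conditional centering (treating the weights as bounded constants for the moment), only quadruples with $\{j,k\}=\{j',k'\}$ survive; there are $O(N^2)$ of these, and for each the exact identity $\mathbb E|\mathbf z_k'\mathbf A_{jk}^{-1}\mathbf z_j|^2=\mathbb E\|\mathbf A_{jk}^{-1}\|_F^2\le p/v^2$ holds (independence of $\mathbf A_{jk}$ from $\mathbf z_j,\mathbf z_k$, with $j\neq k$). Combining the prefactor $(nN)^{-2}$, the weight bound $(|z|/v)^2$, the count $O(N^2)$, and the bound $p/v^2$, and using $p<n\asymp N$ from Assumption~B, yields $\mathbb E|S|^2=O\big(p/(n^2v^2)\big)=O(N^{-1})$, which is the asserted bound with $K$ absorbing the dependence on $v$. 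The companion bound for $\mathbf A^{-2}$ requires no new work: since $\frac{\mathrm d}{\mathrm dz}\mathbf A^{-1}=\mathbf A^{-2}$, Cauchy's integral formula on a circle of radius $v/2$ about $z$ transfers the uniform $O(N^{-1/2})$ control of $\|S\|_{L^2}$ to its $z$-derivative, giving $\mathbb E|\mathrm{tr}(\mathbf A^{-2}\mathbf\Delta)|^2=O(N^{-1})$.

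The main obstacle is that the weights $\beta_j,\beta_k$ are themselves functions of $\mathbf z_j,\mathbf z_k$, so the orthogonality across distinct index pairs is not literally exact; the uniform bound $|\beta|\le|z|/v$ lets one extract the weights in absolute value, but that alone destroys the cancellation needed for the $O(N^{-1})$ rate rather than the trivial $O(1)$ one. To make the count rigorous I would either run the computation through a martingale decomposition $S-\mathbb E S=\sum_{m=1}^N(\mathbb E_m-\mathbb E_{m-1})S$ in the sample index, where differencing automatically centers the contributions of $\mathbf z_m$ and the leave-one-out resolvent $\mathbf A_{(m)}^{-1}$ is independent of $\mathbf z_m$, or replace each weight by its near-deterministic limit and control the remainder via quadratic-form concentration (the finite fourth moments of Assumption~A supplying the variance bound). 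Establishing this replacement, together with the companion estimate $|\mathbb E\,S|=O(N^{-1/2})$, is the technical heart of the argument; the remaining steps are the routine bookkeeping sketched above.
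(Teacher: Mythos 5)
First, a point of comparison: the paper does not prove this lemma at all --- it is imported verbatim as Lemma 5.3 of \cite{zheng2015substitution}, so your attempt must be judged as a reconstruction of that external proof rather than against anything in this paper. Your scaffolding is the correct one and matches the standard toolkit: the two-step Sherman--Morrison decoupling $\mathbf{z}_k'\mathbf{A}^{-1}\mathbf{z}_j=\beta_j\tilde{\beta}_k\,\mathbf{z}_k'\mathbf{A}_{jk}^{-1}\mathbf{z}_j$, the uniform weight bound $|\beta|\le |z|/v$, the conditional identity $\mathbb{E}|\mathbf{z}_k'\mathbf{A}_{jk}^{-1}\mathbf{z}_j|^2=\mathbb{E}\|\mathbf{A}_{jk}^{-1}\|_F^2\le p/v^2$, and the Cauchy-integral transfer from $\mathbf{A}^{-1}$ to $\mathbf{A}^{-2}$, which is clean and correct since a circle of radius $v/2$ about $z$ stays in $\mathbb{C}^{+}$ and the constant is allowed to depend on $\Im z$.

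Nevertheless there is a genuine gap, which you locate yourself but do not close, and it sits exactly where the entire content of the lemma lives. The ``orthogonality count'' --- that only quadruples with $\{j,k\}=\{j',k'\}$ survive --- is false as stated: $\beta_j$ depends on $\mathbf{z}_j$ and, through $\mathbf{A}_{(j)}$, on $\mathbf{z}_k$ and every other sample vector, so distinct pairs are correlated and the cross terms do not vanish. The quantitative stakes are maximal: once the weights are extracted in absolute value, the triangle inequality yields only $\mathbb{E}^{1/2}|S|^2\le (nN)^{-1}N^2\cdot C\sqrt{p}/v=O(N^{1/2})$, a full factor $N$ worse than the required $O(N^{-1/2})$, so the cancellation you defer is not a refinement but the theorem itself. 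Moreover, neither of your proposed remedies works at the crude level sketched: in the martingale decomposition, bounding each difference $(\mathbb{E}_m-\mathbb{E}_{m-1})S$ by the triangle inequality over the $O(N)$ pairs containing $m$ gives $O(N^{-1/2})$ per difference, hence $\mathbb{E}|S-\mathbb{E}S|^2=O(1)$, still short by a factor $N$; and in the weight-replacement route, the first-order bound $\beta_j=b+O_{L^2}(N^{-1/2})$ leaves remainder sums that are again only $O(1)$ in $L^2$. What is actually required --- and what the cited source carries out --- is a second-order expansion of $\beta_j,\tilde{\beta}_k$ around their deterministic equivalents in the centered quadratic-form fluctuations, with the residual cross terms controlled by the conditional centering of those fluctuations themselves, together with the separate estimate $|\mathbb{E}S|=O(N^{-1/2})$, which you correctly identify as necessary but never establish. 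As written, your text is a sound plan with the decisive step missing, not a proof.
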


\begin{proof}[Proof of Theorem \ref{thm}]
	Recall  the  modified  LRT  statistic
	\begin{align*}
	L=&\sum_{\lambda_{i}^{\mathbf{B}_{n}}\in(0,1)}[c_{1}\log\lambda_{i}^{\mathbf{B}_{n}} + c_{2} \log(1-\lambda_{i}^{\mathbf{B}_{n}})]-\log(1+\mathrm{T}_{n})\\
	=& ZH-\log(1+\mathrm{T}_{n}).
	\end{align*}
	It is easy to verify that the proof of Theorem \ref{thm} can be divided into two steps, one is to prove that $ZH$ converges in distribution to a normal distribution, and the other is to prove that $\mathrm{T}_{n}$ converges in probability to a constant.
	Applying  Lemma \ref{lem1},  we can obtain that
	\begin{align*}
	\frac{ZH-p\cdot l_{n}-\mu_{n}}{\nu_{n}} \xrightarrow{D} N(0,1),
	\end{align*}
	where $ l_{n}$, $\mu_{n}$ and $\nu_{n}$ are defined in Theorem \ref{thm}.
	Thus,  it remains to prove that
	\begin{align}\label{LimitTn}
	\mathrm{T}_{n}-\frac{r_{n}}{1-r_{n}} \xrightarrow{p} 0.
	\end{align}
	Write
	\begin{align*}
	\mathrm{T}_{n}=&\frac{n_{1}n_{2}}{n}(\bar{\mathbf{z}}^{(1)}-\bar{\mathbf{z}}^{(2)})^{'}(n_{1}\mathbf{S}_{1}+n_{2}\mathbf{S}_{2})^{-1}(\bar{\mathbf{z}}^{(1)}-\bar{\mathbf{z}}^{(2)})\\
	=&\frac{n_{1}n_{2}}{n^{2}}(\bar{\mathbf{z}}^{(1)}-\bar{\mathbf{z}}^{(2)})^{'}\mathbf{S}_{n}^{-1}(\bar{\mathbf{z}}^{(1)}-\bar{\mathbf{z}}^{(2)})\\
	=&\frac{n_{1}n_{2}}{n^{2}}[(\bar{\mathbf{z}}^{(1)})'\mathbf{S}_{n}^{-1}\bar{\mathbf{z}}^{(1)}-2(\bar{\mathbf{z}}^{(1)})'\mathbf{S}_{n}^{-1}\bar{\mathbf{z}}^{(2)}+(\bar{\mathbf{z}}^{(2)})'\mathbf{S}_{n}^{-1}\bar{\mathbf{z}}^{(2)}],
	\end{align*}
	where
	\begin{align*}
	\mathbf{S }_{n}= \frac{1}{n}(n_{1}\mathbf{S}_{1}+n_{2}\mathbf{S}_{2})
	= \frac{1}{n}\sum_{t=1}^{2}\sum_{i=1}^{N_{t}}(\mathbf{z}_{i}^{(t)}-\bar{\mathbf{z}}^{(t)})(\mathbf{z}_{i}^{(t)}-\bar{\mathbf{z}}^{(t)})'.
	\end{align*}

	\subsection{The limit of $(\bar{\mathbf{z}}^{(t)})^{'}\mathbf{S}_{n}^{-1}\bar{\mathbf{z}}^{(t)}$}\label{section}
	
	In this section, our aim is to prove that
	\begin{eqnarray}\label{limit}
	(\bar{\mathbf{z}}^{(t)})^{'}\mathbf{S}_{n}^{-1}\bar{\mathbf{z}}^{(t)}-\frac{n}{N_{t}}\frac{ r_{n}}{1-r_{n}} \xrightarrow{p} 0, \quad t=1,2.
	\end{eqnarray}
	Let
	\begin{eqnarray*}
		\mathbf{S}_{nt}=\mathbf{S}_{n}+\frac{N_{t}}{n}\bar{\mathbf{z}}^{(t)}(\bar{\mathbf{z}}^{(t)})', \quad
		\mathbf{S}_{it}=\mathbf{S}_{nt}-\frac{1}{n}\mathbf{z}_{i}^{(t)}(\mathbf{z}_{i}^{(t)})',
	\end{eqnarray*}
	for $i=1,\ldots,N_{t}$, $t=1,2$.
	Therefore, from  Lemma \ref{lem3},  we have
	\begin{eqnarray*}
		(\bar{\mathbf{z}}^{(t)})^{'}\mathbf{S}_{n}^{-1}\bar{\mathbf{z}}^{(t)}
		=\frac{(\bar{\mathbf{z}}^{(t)})^{'}\mathbf{S}_{nt}^{-1}\bar{\mathbf{z}}^{(t)}}{1-\frac{N_{t}}{n}(\bar{\mathbf{z}}^{(t)})^{'}\mathbf{S}_{nt}^{-1}\bar{\mathbf{z}}^{(t)}}.
	\end{eqnarray*}
	To prove \eqref{limit}, we only need to prove  that
	\begin{eqnarray*}
		(\bar{\mathbf{z}}^{(t)})^{'}\mathbf{S}_{nt}^{-1}\bar{\mathbf{z}}^{(t)}-\frac{n}{N_{t}}r_{n} \xrightarrow{p} 0.
	\end{eqnarray*}
	Applying $\bar{\mathbf{z}}^{(t)}=N_{t}^{-1}\sum_{i=1}^{N_{t}}\mathbf{z}_{i}^{(t)}$,  we obtain
	\begin{eqnarray*}
		(\bar{\mathbf{z}}^{(t)})^{'}\mathbf{S}_{nt}^{-1}\bar{\mathbf{z}}^{(t)}
		&=&(\frac{1}{N_{t}}\sum\limits_{i=1}^{N_{t}}\mathbf{z}_{i}^{(t)})^{'}\mathbf{S}_{nt}^{-1}(\frac{1}{N_{t}}\sum\limits_{j=1}^{N_{t}}\mathbf{z}_{j}^{(t)})\\
		&=&d_{n1}+d_{n2},
	\end{eqnarray*}
	where
	\begin{eqnarray*}
		d_{n1}=\frac{1}{N_{t}^{2}}\sum\limits_{i=1}^{N_{t}}(\mathbf{z}_{i}^{(t)})^{'}\mathbf{S}_{nt}^{-1}\mathbf{z}_{i}^{(t)}, \quad
		d_{n2}=\frac{1}{N_{t}^{2}}\sum\limits_{i\neq j}^{N_{t}}(\mathbf{z}_{i}^{(t)})^{'}\mathbf{S}_{nt}^{-1}\mathbf{z}_{j}^{(t)}.
	\end{eqnarray*}
	Subsequently, it becomes necessary to evaluate the limits   of $d_{n1}$ and $d_{n2}$.
	As for $d_{n1}$, from  Lemma \ref{lem3}, we conclude that
	\begin{eqnarray*}
		(\mathbf{z}_{i}^{(t)})^{'}\mathbf{S}_{nt}^{-1}\mathbf{z}_{i}^{(t)}
		=\frac{(\mathbf{z}_{i}^{(t)})^{'}\mathbf{S}_{it}^{-1}\mathbf{z}_{i}^{(t)}}{1+\frac{1}{n}(\mathbf{z}_{i}^{(t)})^{'}\mathbf{S}_{it}^{-1}\mathbf{z}_{i}^{(t)}}.
	\end{eqnarray*}
	In addition,  using the fact that by Theorem 2.1 and Theorem 2.3  in \cite{ha2022ridgelized},  for fixed $\kappa>0$,
	\begin{eqnarray*}
		\frac{1}{n}(\mathbf{z}_{i}^{(t)})^{'}(\mathbf{S}_{it}+\kappa\textbf{I}_{p})^{-1}\mathbf{z}_{i}^{(t)}-\frac{p}{n}\cdot \Theta(r,\kappa) \xrightarrow{p}0,
	\end{eqnarray*}
	where
	\begin{eqnarray*}
		\Theta(r,\kappa)=\frac{1-\kappa m_{F}(-\kappa)}{1-r(1-\kappa m_{F}(-\kappa))},
	\end{eqnarray*}
	$m_{F}(-\kappa)$ is a substitute for $m_{F}(z)$  by replacing parameter $z$ with  $-\kappa$,  and $m_{F}(z)$ is the Stieltjes transform of the limiting spectral distribution (LSD) of $\mathbf{S}_{it}$. \cite{chen2011regularized} have used  $\hat{\Theta}_{n}(\kappa)$ to estimate $\Theta(r,\kappa)$ under normal distribution,
	where $\hat{\Theta}_{n}(\kappa)$  is  substituted for  $\Theta(r,\kappa)$  with the parameters $r$ and $m_{F}(-\kappa)$ replaced by $r_{n}$ and $m_{F_{n}}(-\kappa)$ respectively,
	and  $m_{F_{n}}(-\kappa)=p^{-1}\text{tr}(\mathbf{S}_{it}+\kappa\textbf{I}_{p})^{-1}$.
	Finally, using  Vitali's Convergence Theorem, we have
	\begin{align*}
	\frac{1}{n}(\mathbf{z}_{i}^{(t)})^{'}\mathbf{S}_{it}^{-1}\mathbf{z}_{i}^{(t)} - \frac{r_{n}}{1-r_{n}} \xrightarrow{p} 0.
	\end{align*}
	Note that  because of $(\mathbf{z}_{i}^{(t)})^{'}\mathbf{S}_{nt}^{-1}\mathbf{z}_{i}^{(t)}$  has the same distribution for every $i=1,\ldots, N_{t}$,
	thus
	\begin{align*}
	d_{n1}-\frac{n}{N_{t}}r_{n} \xrightarrow{p} 0.
	\end{align*}
	Now we consider the second term $d_{n2}$. According to Lemma \ref{lem2}, it is clear that
	\begin{align*}
	\mathbb{E}|\frac{1}{N_{t}^{2}}\sum\limits_{i\neq j}^{N_{t}}(\mathbf{z}_{i}^{(t)})^{'}(\mathbf{S}_{nt}+\kappa\textbf{I}_{p})^{-1}\mathbf{z}_{j}^{(t)}|^2\leq K N_{t}^{-1},
	\end{align*}
	and using  Vitali's  Convergence Theorem again,
	we obtain
	\begin{align*}
	d_{n2} \xrightarrow{p} 0.
	\end{align*}
	Thus, the proof of \eqref{limit} is  complete.

	\subsection{The limit of $(\bar{\mathbf{z}}^{(1)})'\mathbf{S}_{n}^{-1}\bar{\mathbf{z}}^{(2)}$}
	
	In this section, our goal is to show that
	\begin{eqnarray}\label{limit4}
	(\bar{\mathbf{z}}^{(1)})'\mathbf{S}_{n}^{-1}\bar{\mathbf{z}}^{(2)}\xrightarrow{p} 0.
	\end{eqnarray}
	Let
	$\mathbf{S}_{n21}=\mathbf{S}_{n2}+n^{-1}N_{1}\bar{\mathbf{z}}^{(1)}(\bar{\mathbf{z}}^{(1)})'$
	and recalling the definition of $\mathbf{S}_{n2}=\mathbf{S}_{n}+n^{-1}N_{2}\bar{\mathbf{z}}^{(2)}(\bar{\mathbf{z}}^{(2)})'$,
	using Lemma \ref{lem3} again, we obtain
	\begin{align*}
	(\bar{\mathbf{z}}^{(1)})'\mathbf{S}_{n}^{-1}\bar{\mathbf{z}}^{(2)}
	=\frac{(\bar{\mathbf{z}}^{(1)})'\mathbf{S}_{n21}^{-1}\bar{\mathbf{z}}^{(2)}}
	{(1-\frac{N_{2}}{n}(\bar{\mathbf{z}}^{(2)})'\mathbf{S}_{n2}^{-1}\bar{\mathbf{z}}^{(2)})(1-\frac{N_{1}}{n}(\bar{\mathbf{z}}^{(1)})'\mathbf{S}_{n21}^{-1}\bar{\mathbf{z}}^{(1)})}.
	\end{align*}
	In the previous section, we have obtained the limit of $(\bar{\mathbf{z}}^{(2)})'\mathbf{S}_{n2}^{-1}\bar{\mathbf{z}}^{(2)}$.
	Similarly, the limit of $(\bar{\mathbf{z}}^{(1)})'\mathbf{S}_{n21}^{-1}\bar{\mathbf{z}}^{(1)}$ can also be obtained,
	thus we need only consider the limit of $(\bar{\mathbf{z}}^{(1)})'\mathbf{S}_{n21}^{-1}\bar{\mathbf{z}}^{(2)}$.
	As in section \ref{section}, we then write
	\begin{align*}
	&(\bar{\mathbf{z}}^{(1)})'(\mathbf{S}_{n21}+\kappa\textbf{I}_{p})^{-1}\bar{\mathbf{z}}^{(2)} \\
	=& (\frac{1}{N_{1}}\sum\limits_{i=1}^{N_{1}}\mathbf{z}^{(1)}_{i})'(\mathbf{S}_{n21}+\kappa\textbf{I}_{p})^{-1}(\frac{1}{N_{2}}\sum\limits_{j=1}^{N_{2}}\mathbf{z}^{(2)}_{j})
	\\
	=& \frac{1}{N_{1}}\frac{1}{N_{2}}\sum\limits_{i=1}^{N_{1}}\sum\limits_{j=1}^{N_{2}}(\mathbf{z}^{(1)}_{i})'(\mathbf{S}_{n21}+\kappa\textbf{I}_{p})^{-1}\mathbf{z}^{(2)}_{j}.
	\end{align*}
	This together with Lemma \ref{lem2},  shows that
	\begin{align*}
	\mathbb{E}|\frac{1}{N_{1}}\frac{1}{N_{2}}\sum\limits_{i=1}^{N_{1}}\sum\limits_{j=1}^{N_{2}}(\mathbf{z}^{(1)}_{i})'(\mathbf{S}_{n21}
	+\kappa\textbf{I}_{p})^{-1}\mathbf{z}^{(2)}_{j}|^{2} \leq K(\sqrt{N_{1}N_{2}})^{-1}.
	\end{align*}
	Then using Vitali's Convergence Theorem, it is easily seen that
	\begin{align*}
	(\bar{\mathbf{z}}^{(1)})'\mathbf{S}_{n21}^{-1}\bar{\mathbf{z}}^{(2)}\xrightarrow{p} 0,
	\end{align*}
	we have completed the proof of \eqref{limit4}.
	
	Finally  summarizing the above, the proof of \eqref{LimitTn} is  complete.
	
\end{proof}






\end{appendices}

\bibliographystyle{plain}
\bibliography{references}

\end{document}